\newcommand{\sema}[1]{{\llbracket}#1{\rrbracket}}
\newcommand{\lcm}{\ensuremath{\textsl{lcm}}}
\newcommand{\mincoloru}[1]{\ensuremath{\textsf{MinColor}(#1)}}
\newcommand{\mincolorof}[1]{\ensuremath{\textsf{MinColor}(#1)}}
\renewcommand{\inf}{\textsl{inf}}
\newcommand{\equivcls}[2]{[#1]_{\sim #2}}
\newcommand{\progequiv}[2]{\approx_{#1}^{#2}}
\newcommand{\prog}[2]{\aut{P}_{#1}^{#2}}
\newcommand{\fdfa}[2]{\aut{F}^{#2}(#1)}
\newcommand{\figtext}[1]{\textsc{#1}}
\newcommand{\fdfasubscript}[1]{\textsc{#1}}
\newcommand{\minimlduo}{\fdfasubscript{m}}
\newcommand{\syntactic}{\fdfasubscript{s}}
\newcommand{\colorful}{\text{\tiny{\SixFlowerPetalDotted}}}
\newcommand{\perstaut}[1]{\mathbb{P}_{#1}}
\newcommand{\infcolorof}[1]{\textsf{Color}(#1)}
\newcommand{\fincolorof}[2]{\textsf{color}_{#1}(#2)}
\newcommand{\finColorof}[2]{\textsf{Color}_{#1}(#2)}
\newcommand{\aut}[1]{\ensuremath{\mathcal{#1}}}
\newcommand{\class}[1]{\mathbb{#1}}
\newcommand{\DM}{\class{DM}}
\newcommand{\DMpk}[1]{\ensuremath{\DM_{{#1}}}^{+}}
\newcommand{\DMnk}[1]{\ensuremath{\DM_{{#1}}^{-}}}
\newcommand{\DMpmk}[1]{\ensuremath{\DM_{{#1}}^{\pm}}}
\newcommand{\picm}[1]{\ensuremath{{|#1|_{\subseteq}^{+}}}}
\newcommand{\nicm}[1]{\ensuremath{{|#1|_{\subseteq}^{-}}}}
\newcommand{\pdm}[1]{|#1|_{\leadsto}^{+}}
\newcommand{\ndm}[1]{|#1|_{\leadsto}^{-}}
\tikzset{
spnode/.style={align=center,
        state,
        circle split},
}
\newcommand{\state}[1]{\ensuremath{q_{#1}}}
\newcommand{\statep}[1]{\ensuremath{{#1}}}
\newcommand{\statepp}[1]{\ensuremath{{#1}}}
\newcommand{\stateppp}[1]{\ensuremath{{#1}}}
\newcommand{\nspstate}[2]{\state{#1}/\ensuremath{#2}}
\newcommand{\neginfty}{\ensuremath{-\infty}}
\newcommand{\autofrc}[1]{\aut{A}[{#1}]}
\newcommand{\autstate}[2]{\aut{#1}(#2)}
\newcommand{\auttoq}[2]{\aut{#1}_{#2]}}
\newcommand{\autfromq}[2]{\aut{#1}_{[#2}}
\newcommand{\autfromtoq}[3]{\aut{#1}_{[#2,#3]}}
\newcommand{\figcls}[1]{\textit{#1}}
\newcommand{\figtxt}[1]{\textcolor{blue}{\textsc{#1}}}
\title{A Robust Measure on FDFAs Following Duo-Normalized Acceptance}
\author{Dana Fisman}{Department of Computer Science, Ben-Gurion University, Israel }{dana@cs.bgu.ac.il}{https://orcid.org/0000-0002-6015-4170}{}
\author{Emmanuel Goldberg}{Department of Computer Science, Ben-Gurion University, Israel}{goldbeem@post.bgu.ac.il}{https://orcid.org/0009-0008-6760-1595}{}
\author{Oded Zimerman}{Department of Computer Science, Ben-Gurion University, Israel}{odedzimerman@gmail.com}{https://orcid.org/0009-0003-4020-2037}{Supported by ISF grant 2507/21}
\authorrunning{D. Fisman, E. Goldberg and O. Zimerman} 
\keywords{Finite Automata, Omega-Regular Languages, Wagner Hierarchy, Families of DFAs, Right Congruences, Natural Colors, Complexity Measure, Rabin Index}
\begin{document}

\maketitle

\begin{abstract}
Families of DFAs (FDFAs) are a computational model recognizing $\omega$-regular languages. They were introduced in the quest of finding a Myhill-Nerode theorem for $\omega$-regular languages, and obtaining learning algorithms.  
FDFAs have been shown to have good qualities in terms of the resources required for computing Boolean operations on them (complementation, union, and intersection) and answering decision problems (emptiness and equivalence); all can be done in non-deterministic logspace. 
In this paper we study FDFAs with a new type of acceptance condition, \emph{duo-normalization}, that generalizes the traditional \emph{normalization} acceptance type. We show that duo-normalized FDFAs are advantageous to normalized FDFAs in terms of succinctness as they can be exponentially smaller. Fortunately this added succinctness doesn't come at the cost of increasing the complexity of Boolean operations and decision problems --- they can still be preformed in non-deterministic logspace. 

An important measure of the complexity of an $\omega$-regular language, is its position in the Wagner hierarchy. It is based on the inclusion measure of Muller automata and for the common $\omega$-automata there exist algorithms computing their position. We 
develop a similarly robust measure for duo-normalized (and normalized) FDFAs, which we term the \emph{diameter measure}.
We show that the diameter measure corresponds one-to-one to the position on the Wagner hierarchy. We show that computing it for duo-normalized FDFAs is PSPACE-complete, while it can be done in non-deterministic logspace for traditional FDFAs.
\end{abstract}

\section{Introduction}\label{sec:intro}

Regular languages of finite words possess a natural canonical representation --- the unique minimal DFA. 
The essence of the representation lies 
in a right congruence relation for a language $L$ saying that two words $x$ and $y$ are \emph{equivalent}, denoted $x\sim_L y$ if and only if $xz\in L \Longleftrightarrow yz\in L$ for every finite word $z\in\Sigma^*$. 
The famous Myhill-Nerode theorem~\cite{Myhill57,Nerode58} relates the equivalence classes of $\sim_L$ to the set of words reaching a state of the minimal DFA.

For regular languages of infinite words the situation is more complex. First, there is no unique minimal automaton for any of the common $\omega$-automata acceptance conditions (B\"uchi, Muller, Rabin, Streett and parity). Second, one can indeed define two finite words $x$ and $y$ to be \emph{equivalent} with respect to an $\omega$-regular language $L$, denoted $x\sim_L y$ if $xz\in L \Longleftrightarrow yz\in L$ for every infinite word $z\in\Sigma^\omega$. However, there is no one-to-one correspondence between these equivalence classes and a minimal $\omega$-automaton for $L$. Consider for instance the language $L_1$ stipulating that $aab$ occurs infinitely often. The right congruence relation $\sim_{L_1}$ has only one equivalence class, yet clearly an automaton for $L_1$ needs more than one state. 

A quest for a characterization of an $\omega$-regular language $L$, relating equivalence classes of a semantic definition of $L$ to states of an automaton for $L$, has led to the development of families of right concurrences (FORCs)~\cite{MalerS97} and families of DFAs (FDFAs)~\cite{AngluinF16}. Several canonical FDFAs were introduced over the years,
the periodic FDFA~\cite{CalbrixNP93old}, the syntactic FDFA~\cite{MalerS97}, the recurrent FDFA~\cite{AngluinF16}, and the limit FDFA~\cite{LST23}.
All these representations have a one-to-one correspondence between the equivalence classes of the right congruence relations and the states of the respective automata.
This is very satisfying in the sense that they induce a semantic canonical representation, i.e. one that is agnostic to a particular automaton; and this is a beneficial property when it comes to learning~\cite{AngluinF16,LiSTCX19}.
FDFAs have additional good qualities --- 
computing Boolean operations on them (complementation, union, and intersection) and answering decision problems (emptiness and equivalence) can all be done cheaply, in non-deterministic logarithmic space~\cite{AngluinBF18}. 

Loosely speaking, an FDFA is composed of a \emph{leading automaton} $\aut{Q}$ and a family of \emph{progress DFAs} $\{\aut{P}_q\}$, one for each state $q$ of $\aut{Q}$. FDFAs consider only ultimately periodic words, i.e. words of the form $u(v)^\omega$ for $u\in\Sigma^*$ and $v\in\Sigma^+$. Since two $\omega$-regular languages recognize the same language if and only if they agree on the set of ultimately periodic words~\cite{Buchi62,CalbrixNP93old}, this is not really a limitation. Acceptance of an ultimately periodic word $(u,v)$ representing $uv^\omega$ is determined by first \emph{normalizing} 
the word wrt to the leading automaton --- this means considering a decomposition $(uv^i,v^j)$ such that $v^j$ loops on the state of the leading automaton reached by $uv^i$ and then checking acceptance of $v^j$ in the respective progress DFA. This normalization was introduced as it leads to an exponential save in the number of states~\cite{AngluinF16}.
In this paper we consider a new acceptance condition for FDFAs, which we term \emph{duo-normalization}, which considers decompositions $(uv^i,v^j)$ where in addition $v^j$ closes a loop in the state it arrives at in the respective progress DFA. We term FDFAs with this new type of acceptance \emph{duo-normalized} FDFAs. Defining FDFAs with such an acceptance condition was also suggested in the future work section of~\cite{BL23}.

We show that duo-normalized FDFAs also enjoy the good quality that computing complementation, union, and intersection and answering emptiness and equivalence can be done in non-deterministic logarithmic space. In terms of succinctness we show that they can be exponentially smaller than normalized FDFAs. 

We are also interested in the problem of finding their position in the \emph{Wagner hierarchy}. It is noted in~\cite[Sec. 5]{EhlersS22} that while for $\omega$-automata there are algorithms for computing their position in the Wagner hierarchy, there is no clear way to relate a particular FDFA to its position in the Wagner Hierarchy. 

In~\cite{Wagner75} Wagner defined a complexity measure on Muller automata: the \emph{inclusion measure}. 
Wagner showed that the inclusion measure is robust in the sense that any two Muller automata for the same language (minimal or not) have the same inclusion measure. This is thus a semantic property of the language. Since the inclusion measure is unbounded it induces an infinite hierarchy. The position on the Wagner hierarchy has been shown to be tightly related to the minimal number of colors required in a parity automaton, and the minimal number of pairs required in a Rabin/Street automaton. Deterministic B\"uchi and coB\"uchi (which are less expressive than deterministic Muller/Rabin/Streett/parity automata, that are capable of recognizing all the $\omega$-regular languages) lie in the bottom levels of the hierarchy. Given an automaton of one of the common types (B\"uchi, coB\"uch, Muller, Rabin, Streett, parity), one can compute its position in the Wagner Hierarchy in polynomial time~\cite{WilkeY96,CartonM99,PerrinPinBook}. 

We develop a syntactic notion of a measure on FDFA, that we term \emph{the diameter measure}. Loosely speaking it relates to chains of prefixes $u \preceq v_1 \prec v_2 \prec \ldots \prec v_k$ such that $u(v_i)^\omega\in L$ iff $u(v_{i+1})^\omega \notin L$, and moreover, each of the words $v_i$ is \emph{persistent} in the progress DFA of $u$. The precise definition of the term \emph{persistent} and \emph{persistent chains} is deferred to \autoref{sec:meas}. We show there that this measure is robust in the sense that computing it on two FDFAs for the same language will give the same result. The proof is by relating it to the position on the Wagner hierarchy. 
We show that computing the Wagner position of a duo-normalized FDFA can be done in PSPACE and it is PSPACE-complete, whereas for normalized FDFAs this computation can be done in non-deterministic logspace. So this is one place where the added succinctness of duo-normalized FDFAs comes at a price. 

The rest of the paper is organized as follows. We give some basic definitions and explain the Wagner hierarchy in \autoref{sec:prelim}.
We introduce duo-normalized FDFAs in \autoref{sec:fdfa-acc} where we show that it is not more expensive to compute the Boolean operations on them, or to answer emptiness and equivalence about them. \autoref{sec:meas} is devoted to defining the \emph{diameter measure} and proving that its computation is PSPACE-complete. \autoref{sec:nat-col-succ} relates to the recent works of~\cite{EhlersS22,BL23} that defines natural colors for infinite and finite words wrt to a language. This section  discusses a new canonical model for FDFAs, termed the \emph{colorful FDFA} and provides several succinctness results, relating duo-normalized FDFAs, the colorful FDFA, and previously studied canonical FDFAs. We conclude with a short discussion in \autoref{sec:conclusions}.
Due to space limitations, some proofs are deferred to the appendix.

\section{Preliminaries}\label{sec:prelim}

A (complete deterministic) \emph{automaton structure} is a tuple $\aut{A}=(\Sigma, Q, q_0, \delta)$ consisting of an alphabet $\Sigma$, a finite set $Q$ of
states, an initial state $q_0$, and a transition function ${\delta: Q \times \Sigma \rightarrow Q}$. 
A run of an automaton on a finite word ${v=a_1 a_2\ldots a_n}$ is a sequence of states ${q_0,q_1,\ldots,q_n}$ starting with the initial state such that for each $i\geq 0$, ${q_{i+1}=\delta(q_i,a_i)}$.  A run on an infinite word is defined similarly and results in an infinite sequence of states. Let $\aut{A}=( \Sigma, Q, q_0,\delta)$ be an automaton. We use $\autstate{A}{w}$ for the state that $\aut{A}$ reaches on reading $w$.

By augmenting an automaton structure with an acceptance condition $\alpha$, obtaining a tuple $( \Sigma, Q, q_0,$ $\delta, \alpha )$, we get an \emph{automaton}, a machine that accepts some words and rejects others. An automaton 
accepts a word if the run on that word is accepting. For finite words the acceptance condition is a set $F \subseteq Q$ and a run on a word $v$ is accepting if it ends in an accepting state, i.e. a state $q\in F$. For infinite words, there are various acceptance conditions in the literature.
The common ones are B\"uchi, Muller, Rabin, Streett and parity. They are all   
are defined with respect to the set of states visited infinitely often during a run. For a run $\rho=q_0q_1q_2\ldots$ we define $\inf(\rho)= \{ q \in Q ~|~ \forall i\!\in\!\mathbb{N}.\ \exists j\!>\!i.\ q_j=q\}$. 
We focus here on 
the most common  types  --- B\"uchi, Muller and parity.
  
\begin{itemize}[nosep]
\item 
	A \emph{B\"uchi} acceptance condition is a set $F \subseteq Q$. A run of a B\"uchi\ automaton is accepting if it visits $F$ infinitely often. That is, if $\inf(\rho)\cap F \neq \emptyset$.
\item 
	A \emph{parity} acceptance condition is a mapping $\kappa:Q\rightarrow \{0,1,\ldots,k\}$ of the states to a number (referred to as a \emph{color}). For a subset $Q'\subseteq Q$, we use $\kappa(Q')$ for the set $\{\kappa(q)~|~q\in Q'\}$. A run  $\rho$ of a parity automaton is accepting if the \textbf{minimal} color in $\kappa(\inf(\rho))$ is \textbf{even}.
\item 
	A \emph{Muller} acceptance condition is a set   $\alpha=\{F_1,\ldots,F_k\}$ where $F_i\subseteq Q$ for all $1\leq i \leq k$.
	A run $\rho$ of a Muller automaton is accepting iff $\inf(\rho)\in \alpha$. That is, if the set of states visited infinitely often by the run $\rho$ is exactly one of the sets $F_i$ specified in $\alpha$.  

\end{itemize}
We use DBA, DPA, and DMA as acronyms for deterministic complete B\"uchi, coB\"uchi, parity, and Muller automata, respectively.
We use $\sema{\aut{A}}$ to denote the set of words accepted by a given automaton $\aut{A}$. 
Two automata $\aut{A}$ and $\aut{B}$ are \emph{equivalent} if $\sema{\aut{A}}=\sema{\aut{B}}$. 
Let $\aut{A}=( \Sigma, Q, q_0,\delta,F)$ be a DFA. Let $q,q'\in Q$. 
We use $\auttoq{A}{q}$ for $( \Sigma, Q, q_0,\delta,\{q\})$, namely a DFA for words reaching state $q$. We use 
$\autfromq{\aut{A}}{q}$ for $( \Sigma, Q, q,\delta,F)$,
namely a DFA for words exiting state $q$. Last, we use 
$\autfromtoq{\aut{A}}{q}{q'}$ for $( \Sigma, Q, q,\delta,\{q'\})$, namely a DFA for the words exiting $q$ and reaching $q'$.

The \emph{right congruence relation} for an $\omega$-language $L$ relates two finite words $x$ and $y$ iff there is no infinite suffix $z$ differentiating them, that is $x\sim_L y$  (for $x,y\in\Sigma^*$) iff $ \forall z\in\Sigma^\omega.\ xz\in L \iff yz \in L$. We use  $\equivcls{u}{L}$ (or simply $[u]$ when $L$ is clear from the context) for the equivalence class of $u$ induced by $\sim_L$.
A right congruence $\sim$ can be naturally associated with an automaton structure $( \Sigma, Q, q_0, \delta )$ as follows: the set of states $Q$ are the equivalence classes of $\sim$. The initial state $q_0$ is the equivalence class $[\epsilon]$. The transition function $\delta$ is defined by $\delta([u],\sigma)=[u\sigma]$.  We use $\autofrc{\sim}$ to denote the automaton structure induced by $\sim$.

Following~\cite{EhlersS22} a word $u'\in\Sigma^*$ is said to be a \emph{suffix-invariant} of $u\in\Sigma^*$ (in short \emph{$u$-invariant}) with respect to $L$ if $u\sim_L uu'$. That is, no suffix distinguishes between $u$ and the word obtained by concatenating $u'$ to $u$. 

\subsection{The Wagner Hierarchy}

\begin{figure}[b]
\begin{center}
\scalebox{0.5}{
    \begin{tikzpicture}[->,>=stealth',shorten >=1pt,auto,node distance=2.2cm,semithick,initial text=]
    
    \node[state,initial] (q0)                     {\state{0}};
    \node[state]         (q1) [above right of=q0] {\state{1}};
        \node[state]         (q2) [below right of=q0] {\state{2}};
        \node[state]         (q3) [right of=q1]       {\state{3}};
        \node[state]         (q4) [right of=q2]       {\state{4}};
        \node[label]         (alpha) [below of=q2, node distance=1cm] {$\quad\alpha:\{\{q_3\}, \{q_1,q_2,q_3\}\}$ };
    \node[label]            (l) [above  of=q0] {$\aut{M}$:};        

    \path (q0) edge [bend left=10]  node {$a$} (q1);
    \path (q0) edge [bend right=10] node [below] {$b$} (q2);
    \path (q1) edge                 node {$a$} (q3);
    \path (q1) edge [bend left=15]  node {$b$} (q2);	
    \path (q2) edge                 node {$b$} (q4);
    \path (q2) edge [bend left=15]  node {$a$} (q1);
    \path (q3) edge [loop right]    node {$a$} (q3);
    \path (q3) edge                 node [right, near start] {$b$} (q2);	
    \path (q4) edge [loop right]    node {$b$} (q4);
    \path (q4) edge                 node [right, near start] {$a$} (q1);
    
    \node[state,initial] (s0) [right of=q0, node distance=7cm] {\nspstate{0}{3}};
    \node[state]         (s1) [above right of=s0] {\nspstate{1}{3}};
    \node[state]         (s2) [below right of=s0] {\nspstate{2}{3}};
    \node[state]         (s3) [right of=s1]       {\nspstate{3}{2}};
    \node[state]         (s4) [right of=s2]       {\nspstate{4}{1}};
    \node[label]            (l) [above  of=s0] {$\aut{D}$:};       

    \path (s0) edge [bend left=10]  node {$a$} (s1);
    \path (s0) edge [bend right=10] node [below] {$b$} (s2);
    \path (s1) edge                 node {$a$} (s3);
    \path (s1) edge [bend left=15]  node {$b$} (s2);	
    \path (s2) edge                 node {$b$} (s4);
    \path (s2) edge [bend left=15]  node {$a$} (s1);
    \path (s3) edge [loop right]    node {$a$} (s3);
    \path (s3) edge                 node [right, near start] {$b$} (s2);	
    \path (s4) edge [loop right]    node {$b$} (s4);
    \path (s4) edge                 node [right, near start] {$a$} (s1);	

    			\node[label] (0)    [ right of=s3, node distance=4.0cm] { };
       
                    \node[label] (1p)  [below of=0, node distance=0.5cm]  {$\DMpk{1}$};
				\node[label] (1i)   [below of=1p, node distance=1.2cm]  { };
				\node[label] (1n)    [below of=1i, node distance=1.2cm] {$\DMnk{1}$};
				\node[label] (1pn)    [ right of=1i, node distance=1.2cm] {$\DMpmk{1}$};
				
				\node[label] (2p)   [ right of=1p, node distance=2.4cm]  {$\DMpk{2}$};
				\node[label] (2i)   [below of=2p, node distance=1.2cm]  { };
				\node[label] (2n)    [below of=2i, node distance=1.2cm] {$\DMnk{2}$};
				\node[label] (2pn)    [ right of=2i, node distance=1.2cm] {$\DMpmk{2}$};
				
				\node[label] (3p)   [ right of=2p, node distance=2.4cm]  {$\DMpk{3}$};
				\node[label] (3i)   [below of=3p, node distance=1.2cm]  { };
				\node[label] (3n)    [below of=3i, node distance=1.2cm] {$\DMnk{3}$};
				\node[label] (3pn)    [ right of=3i, node distance=1.2cm] {$\DMpmk{3}$};
				
				\node[label] (4p)   [ right of=3p, node distance=2.4cm]  {$\cdots$};
				\node[label] (4i)   [below of=4p, node distance=1.2cm]  { };
				\node[label] (4n)    [below of=4i, node distance=1.2cm] {$\cdots$};
				\node[label] (4pn)    [ right of=4i, node distance=1.2cm] {$\cdots$};
	
				\path (1n) edge  (1pn); 
				\path (1p) edge  (1pn); 
				\path (2n) edge  (2pn); 
				\path (2p) edge  (2pn); 
				\path (3n) edge  (3pn); 
				\path (3p) edge  (3pn); 
				
				\path (1pn) edge  (2p); 
				\path (1pn) edge  (2n); 
				\path (2pn) edge  (3p); 
				\path (2pn) edge  (3n); 
				\path (3pn) edge  (4p); 
				\path (3pn) edge  (4n);

    \end{tikzpicture}}
\end{center}	
\caption{Left, Middle: A DMA $\aut{M}$ and a DPA $\aut{D}$ for the language $L_1=L_{\infty aa \wedge \neg \infty bb}$. 
Right: The Wagner Hierarchy. An arrow from $\class{C}$ to $\class{D}$ says that $\class{C} \subsetneq \class{D}$. Note that many other strict inclusions follow by transitivity.
}\label{fig:inf-aa-fin-bb-dma}\label{fig:inf-aa-fin-bb-dpa}\label{fig-wagner-inclsion}
\end{figure}
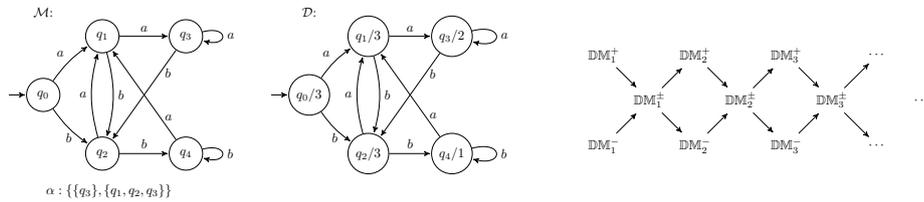

Let  
$\aut{M}=(\Sigma,Q,q_0,\delta,\alpha)$ be a complete deterministic Muller automaton. We use the term \emph{strongly connected component (SCC)} for a set of states $S\subseteq Q$ such that there is a path between every pair of states in $S$. If $S$ is a singleton $\{q\}$ we require a self-loop on $q$ for $S$ to be an SCC.
We use the term \emph{MSCC} for a \emph{maximal SCC}, that is, an SCC $S$ such that no set $S' \supset S$ is an SCC.
We say that an SCC $S\subseteq Q$ is \emph{accepting} iff $S\in\alpha$. Otherwise we say that $S$ is \emph{rejecting}. We define the \emph{positive inclusion measure} of $\aut{M}$, denoted $\picm{\aut{M}}$ to be the maximal length of an inclusion chain $S_1 \subset S_2 \subset S_3 \subset \cdots \subset S_k$ of SCCs with alternating acceptance 
where $S_1$ is accepting. (Therefore for each $1\leq i \leq k$ if $i$ is odd then $S_i$ is accepting, and if it is even then $S_i$ is rejecting.)
Likewise, we define the \emph{negative inclusion measure} of $\aut{M}$, denoted $\nicm{\aut{M}}$ to be the maximal length of an inclusion chain $S_1 \subset S_2 \subset S_3 \subset \cdots \subset S_k$ of SCCs with alternating acceptance where $S_1$ is rejecting. (Therefore for each $1\leq i \leq k$ if $i$ is odd then $S_i$ is rejecting, and if it is even then $S_i$ is accepting.)
Note that for any $\aut{M}$ the difference between $\picm{\aut{M}}$ and $\nicm{\aut{M}}$ may be at most one, since by omitting the innermost element of a chain we remain with a chain shorter by one, and of the opposite sign.

\begin{example}\label{ex:inf-aa-fin-bb-inc-ms}
\autoref{fig:inf-aa-fin-bb-dma} shows a Muller automaton for the language $L_1=L_{\infty aa \wedge \neg \infty bb}$. 
The inclusion chain 
$\{\state{1},\state{2}\} 
\subset 
\{\state{1},\state{2},\state{3}\}
\subset 
\{\state{1},\state{2},\state{3},\state{4}\}$ 
is a negative inclusion chain of size $3$ 
(since $\{\state{1},\state{2}\}$ is rejecting, $\{\state{1},\state{2},\state{3}\}$ is accepting, and $\{\state{1},\state{2},\state{3},\state{4}\}$ is rejecting). There are no negative inclusion chains of size $4$, and there are no positive inclusion chains of size $3$. (Note that $\{q_3\}\subset\{q_2,q_3\}\subset\{q_1,q_2,q_3\}$ is not an inclusion chain since $\{q_2,q_3\}$ is not an SCC.)
\end{example}

Wagner~\cite{Wagner75} showed that this measure is robust in the sense that any two DMAs that recognize the same language have the same positive and negative inclusion measures. 
\begin{theorem}[Robustness of the inclusion measures~\cite{Wagner75}]\label{thm:wagner-robust} 
Let $\aut{M}_1$, $\aut{M}_2$ be two DMAs where $\sema{\aut{M}_1}=\sema{\aut{M}_2}$.
For $i\in\{1,2\}$, let $\picm{\aut{M}_i}=p_i$ and $\nicm{\aut{M}_i}=n_i$.
Then $p_1=p_2$ and $n_1=n_2$.
\end{theorem}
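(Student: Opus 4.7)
The plan is to reduce the theorem to a product construction. Let $\aut{P} = \aut{M}_1 \times \aut{M}_2$ be the synchronous product: its state space is $Q_1 \times Q_2$, its transitions are componentwise, and its Muller acceptance is $\alpha_P = \{S \mid \pi_1(S) \in \alpha_1\}$, where $\pi_i$ projects to the $i$-th coordinate. We may assume both automata are trim, so every SCC of $\aut{P}$ is reachable. Each SCC $S$ of $\aut{P}$ then arises as the $\inf$-set of the run on some ultimately periodic word $uv^\omega$, so the hypothesis $\sema{\aut{M}_1} = \sema{\aut{M}_2}$ forces $\pi_1(S) \in \alpha_1 \iff \pi_2(S) \in \alpha_2$, showing that $\alpha_P$ is equivalently defined via $\aut{M}_2$ and that $\sema{\aut{P}} = L$. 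It therefore suffices to prove $\picm{\aut{M}_i} = \picm{\aut{P}}$ for $i \in \{1,2\}$ (and analogously for $\nicm{\cdot}$), since the symmetric result then yields $\picm{\aut{M}_1} = \picm{\aut{M}_2}$ and $\nicm{\aut{M}_1} = \nicm{\aut{M}_2}$.

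The inequality $\picm{\aut{P}} \leq \picm{\aut{M}_1}$ is the easy direction, via projection. Given a positive chain $T_1 \subsetneq \cdots \subsetneq T_k$ of SCCs of $\aut{P}$, the projections $\pi_1(T_j)$ are SCCs of $\aut{M}_1$ (projection preserves mutual reachability) with $\pi_1(T_j) \subseteq \pi_1(T_{j+1})$. Each inclusion is strict, for otherwise $T_j$ and $T_{j+1}$ would share acceptance status in $\aut{M}_1$ by definition of $\alpha_P$, contradicting alternation in the chain. The projected chain then has length $k$ and alternates with the correct starting parity, witnessing $\picm{\aut{M}_1} \geq k$.

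For the reverse direction $\picm{\aut{M}_1} \leq \picm{\aut{P}}$ I would lift any positive chain $S_1 \subsetneq \cdots \subsetneq S_k$ of $\aut{M}_1$ to a chain $T_1 \subsetneq \cdots \subsetneq T_k$ in $\aut{P}$ with $\pi_1(T_j) = S_j$, using a Ramsey-style idempotent pumping argument. For each $j$ fix a finite word $w_j$ whose traversal from some state of $S_j$ covers every edge of the sub-automaton induced by $S_j$; replace $w_j$ by a power $w_j^{n_j}$ that is idempotent in the (finite) transition monoid of $\aut{P}$; then realize the nested chain as an infinite iteration of a pattern like $w_1^{n_1} w_2^{n_2} \cdots w_k^{n_k} \cdots w_2^{n_2} w_1^{n_1}$. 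Idempotence makes the $\inf$-set of the $\aut{P}$-run restricted to the level-$j$ infix stabilize to a single SCC $T_j$ with $\pi_1(T_j) = S_j$; strictness $T_j \subsetneq T_{j+1}$ follows from $S_j \subsetneq S_{j+1}$; and alternation in $\aut{P}$ is inherited from $\aut{M}_1$ by the definition of $\alpha_P$. The negative inclusion measure is handled verbatim after swapping the starting parity, as permitted by the observation (noted in the text) that the two measures differ by at most one.

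The main obstacle is the lifting step: several SCCs of $\aut{P}$ may project onto the same SCC of $\aut{M}_1$, so one must choose the lifts $T_j$ coherently to keep them strictly nested while still projecting exactly onto the prescribed $S_j$. The idempotent-power trick together with the nested infix pattern is what forces successive $\inf$-sets to sit inside one another rather than landing on incomparable SCCs of $\aut{P}$ that happen to share the right projection.
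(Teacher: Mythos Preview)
The paper does not prove this theorem; it is quoted in the preliminaries with a citation to Wagner and used as background. There is therefore no paper proof to compare against.

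Your product-and-projection strategy is the standard route, and the projection direction $\picm{\aut{P}}\le\picm{\aut{M}_1}$ is clean as you wrote it. The lifting direction is where the real work sits, and your sketch has the gap you yourself flag: taking each $w_j^{n_j}$ idempotent in the transition monoid of $\aut{P}$ guarantees that one application lands on a fixed point, but distinct idempotents $e_j$ may have distinct fixed points, so the palindromic pattern $e_1 e_2 \cdots e_k \cdots e_2 e_1$ need not return to the product state it started from (concretely, $e_1(e_2(p))$ is a fixed point of $e_1$ but not necessarily $p$, even when $p$ already is one). Without a common anchor state the resulting ``$T_j$'' are not forced to be nested. The clean fix is to anchor everything at a single product state: pick $s\in S_1$ and a word $u$ reaching it; pump $v_1$ until some power loops on the product state $(s,q)$ reached from the initial state; then inductively set $y_1=v_1^{\,l_1}$ and $y_j=(y_{j-1}v_j)^{l_j}$ with exponents chosen so that each $y_j$ loops on $(s,q)$ in $\aut{P}$. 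The sets $\inf_{\aut{P}}(u\,y_j^{\,\omega})$ then give the nested chain $T_1\subsetneq\cdots\subsetneq T_k$ with $\pi_1(T_j)=S_j$, and alternation transfers via $\alpha_P$. This is exactly the pumping scheme the paper uses later in the proof of \autoref{thm:fdfa-wagner-cor}, so you can borrow that construction verbatim.
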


Since this measure is robust and since one can construct DMAs with arbitrarily long inclusion chains, the inclusion measure yields an infinite hierarchy of $\omega$-regular languages.
Formally, the classes of the Wagner Hierarchy are defined as follows for a positive integer $k$:
\begin{center}
$\begin{array}{l@{~=~}l}
    \DMpk{k} & \{ L ~|~  \exists \mbox{ DMA } \aut{M}  \mbox{ s.t. } \sema{\aut{M}}=L \mbox{ and } \picm{\aut{M}}\leq k \mbox{ and }  \nicm{\aut{M}}< k \} 
    \\[2mm]
    \DMnk{k} & \{ L ~|~  \exists \mbox{ DMA } \aut{M}  \mbox{ s.t. } \sema{\aut{M}}=L \mbox{ and }  \picm{\aut{M}}< k \mbox{ and }  \nicm{\aut{M}}\leq k \} 
    \\[2mm]
    \DMpmk{k} & \{ L ~|~  \exists \mbox{ DMA } \aut{M}  \mbox{ s.t. } \sema{\aut{M}}=L \mbox{ and } \picm{\aut{M}}\leq k \mbox{ and } \nicm{\aut{M}}\leq k \} \\
\end{array}$
\end{center}	
\begin{example}\label{ex:WagnerHierMeasure}
Following the discussion on the inclusion measure of the DMA for $L_{\infty aa \wedge \neg \infty bb}$ given in \autoref{ex:inf-aa-fin-bb-inc-ms} we conclude that $L_{\infty aa \wedge \neg \infty bb}\in\DMnk{3}$. 
\end{example}

The hierarchy is depicted in \autoref{fig-wagner-inclsion}. Note that if $\aut{A}$ is an $\omega$-automaton, for any of the $\omega$-automata types, then it can be recognized by a Muller automaton on the same structure. Transforming a B\"uchi $\aut{B}$ automaton with accepting states $F$ to a Muller automaton $\aut{M}_{\aut{B}}$ yields an acceptance condition $\alpha_{\aut{B}}=\{F'~|~F' \cap F \ne \emptyset\}$.  Note that for any $F'\in\alpha_{\aut{B}}$ and any $F''\supseteq F'$ it holds that $F''\in\alpha_{\aut{B}}$. Therefore $\picm{\aut{M}_{\aut{B}}}=1$ and 
$\nicm{\aut{M}_{\aut{B}}}=2$ (unless $\sema{\aut{B}}=\Sigma^\omega$ or $\emptyset$). Hence all languages recognized by a DBA are in $\DMnk{2}$. 
Dually, one can see that all languages recognized by a DCA  are in $\DMpk{2}$.  It can be shown~\cite{PerrinPinBook} that
a parity automaton for a language in $\DMnk{k}$ can suffice with colors $\{1,\ldots,k\}$ if $k$ is odd, and $\{0,\ldots,k\}$ if it is even.
Likewise a DPA for for a language in $\DMpk{k}$ can suffice with colors $\{0,\ldots,k-1\}$ if $k$ is odd, and with $\{1,\ldots,k\}$ otherwise. 
A DPA in $\DMpmk{k}$  requires $k+1$ colors starting with $0$.

\begin{example}\label{ex:ex:WagnerHierMeasureInParity}
 Consider the parity automaton for 
 $L_{\infty aa \wedge \neg \infty bb}$ given in \autoref{fig:inf-aa-fin-bb-dpa}. It uses the three colors $\{1,2,3\}$ in accordance with our conclusion in \autoref{ex:WagnerHierMeasure} that
 $L_{\infty aa \wedge \neg \infty bb}\in\DMnk{3}$.
\end{example}

\section{FDFAs with duo-normalized acceptance condition}\label{sec:fdfa-acc}
As already mentioned, none of the common $\omega$-automata has a unique minimal automaton, and the number of states in the minimal automaton may be bigger than the number of equivalence classes in $\sim_L$. For example, $L_2=(\Sigma^*abc)^\omega$ has one equivalence class under $\sim_{L_2}$ since an infinite extension $xw$ for $w\in\Sigma^\omega$ of any finite word $x$ is in the language iff $w\in L_2$.

The quest for finding a correspondence between equivalence classes of the language and an automaton model lead to the development of \emph{Families of Right Congruences} (FORCs)~\cite{MalerS97} and \emph{Families of DFAs} (FDFAs)~\cite{AngluinF16}.
These definitions build on
the well-known result that two $\omega$-regular languages are equivalent if and only if they agree on the set of ultimately periodic words~\cite{Buchi62,CalbrixNP93old}, and thus consider only ultimately periodic words, i.e. words of the form $uv^\omega$. We also consider only such words, and represent them as pairs $(u,v)$ for $u\in\Sigma^*$ and $v\in\Sigma^+$.

Several canonical FDFAs were introduced over the years,
the periodic FDFA~\cite{CalbrixNP93old}, the syntactic FDFA~\cite{MalerS97}, the recurrent FDFA~\cite{AngluinF16}, and the limit FDFA~\cite{li2023novel}. 
We do not go into the details of their definition but summarize the succinctness relations among them.
It was shown in~\cite{AngluinF16} that the syntactic and recurrent FDFAs can be exponentially more succinct than the periodic FDFA, while the translations in the other direction are at most polynomial. Further,
  the recurrent FDFA is never bigger than the syntactic and can be quadratically more succinct than the syntactic FDFA~\cite{AngluinF16}. Limit FDFAs are the duals of recurrent FDFAs, and thus can also be at most quadratically bigger than the recurrent; and
  there are examples of quadratic blowups in the transformation from the recurrent to the limit or vice versa~\cite{li2023novel}.

The gain in succinctness in going from the syntactic to the recurrent (or limit) FDFAs is removing the requirement that $x \approx^u y$ implies that $ux \sim uy$ which comes from the definition of FORC. \footnote{A FORC is a pair $(\sim,\{\approx^u\})$ where $\sim$ is a right congruence, $\approx^u$ is a right congruence for every equivalence class $u$ of $\sim$, and it satisfies that $x \approx^u y$ implies that $ux \sim uy$.  An $\omega$-language $L$ is recognized by a FORC
$(\sim,\{\approx^u\})$ if it can be written as a union of sets of the form $[u]([v]_u)^\omega$ such that $uv \sim_L u$. Every FORC corresponds to an FDFA, but the other direction many not hold. This is since there is no requirement on the relation between the progress DFAs and the leading automaton in an FDFA, while there is in a FORC.}
The gain in succinctness of the syntactic/recurrent/limit FDFAs compared to the periodic FDFA is due to the use of a different type of acceptance condition.

An FDFA is a pair $\aut{F}=(\aut{Q},\{\prog{q}{}\}_{q\in Q})$ consisting of a \emph{leading}
automaton structure $\aut{Q}$ and of a \emph{progress} DFA $\prog{q}{}$
for each state $q$ of $\aut{Q}$. There are a few ways to define acceptance on FDFAs. An $\omega$-word $w$ is accepted by an FDFA using $\textsc{a}$-acceptance if there exists an $\textsc{a}$-decomposition of $w$ into $(u,v)$ such that $v$ is accepted by the progress DFA $\prog{\aut{Q}(u)}{}$ corresponding to the state $\aut{Q}(u)$ that the leading automaton reaches after reading $u$. 
We henceforth use $\prog{u}{}$ for $\prog{\aut{Q}(u)}{}$.

In exact acceptance, that is used in the periodic FDFA, any decomposition of the $\omega$-word into an ultimately periodic word is considered. In normalized acceptance, used by the other three canonical FDFAs, only decompositions $(u,v)$ in which the periodic part $v$ loops in the leading automaton, i.e. $\aut{Q}(u)=\aut{Q}(uv)$ are considered. 

As shown in~\cite{AngluinF16} this acceptance condition, termed \emph{normalization}, can yield an exponential save in the number of states. The intuition is that some periods are easier to verify as good periods if one considers some repetitions of them. For instance, in the language $(\bigcup_{1\leq i \leq n} ( i \cdot (\Sigma\setminus\{i\})^* \cdot i))^\omega$ over $\Sigma=\{\sharp\}\cup\{1,\ldots,n\}$ it is harder to figure out that $(\epsilon,12)$ should be accepted than that $(\epsilon, 121\!\cdot\! 212)$ though both represent the same $\omega$-word. 

For similar reasons one might wonder if considering decompositions that also close a loop in the progress automaton, might as well lead to an exponential saveup.  In the following we define FDFAs with such an acceptance condition, which we term \emph{duo-normalization}. Considering FDFAs with such an acceptance condition was also proposed in the future work section of~\cite{BL23}, where it is termed \emph{idempotent}.

\begin{definition}[$\omega$-words decomposition wrt an FDFA]
Let $u\in\Sigma^*$, $v\in\Sigma^+$ and $w\in\Sigma^\omega$.
Let $\aut{F}=(\aut{Q},\{\prog{q}{}\}_{q\in Q})$ be an FDFA.
\begin{itemize}[nosep]
    \item $(u,v)$ is a decomposition of $w$ if $uv^\omega=w$.
    \item A decomposition $(u,v)$ is \emph{normalized} if $\aut{Q}(uv)=\aut{Q}(u)$.
    \item A normalized decomposition is \emph{duo-normalized} if $\aut{P}_u(vv)=\aut{P}_u(v)$.  
\end{itemize}
\end{definition}

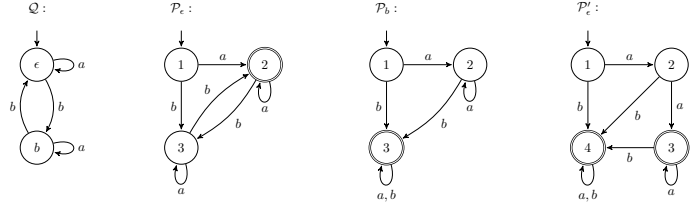
\begin{figure}[t]
\begin{center}
\scalebox{0.5}{
    \begin{tikzpicture}[->,>=stealth',shorten >=1pt,auto,node distance=2.2cm,semithick,initial text=,  initial above]

    \node[state,initial]    (Q1)        {$\epsilon$};
    \node[state]    (Q2) [below of=Q1]       {$b$};
    
    \node[label]            (labelQ) [above of=Q1, node distance=1.5cm] {$\aut{Q}:$};
    
    \path (Q1) edge [loop right]             node {$a$} (Q1);
    \path (Q2) edge [loop right]             node {$a$} (Q2);
    \path (Q1) edge [bend left]             node {$b$} (Q2);
    \path (Q2) edge [bend left]             node {$b$} (Q1);

    \node[state,initial]    (p1) [right of=Q1, node distance=3.8cm]       {$\statep{1}$};
    \node[state,accepting]    (p2) [right of=p1]       {$\statep{2}$};
    \node[state]    (p3) [below of=p1]       {$\statep{3}$};

    \node[label]            (labelPe) [above of=p1, node distance=1.5cm] {$\aut{P}_\epsilon:$};

    \path (p1) edge               node {$a$} (p2);
    \path (p1) edge               node [left] {$b$} (p3);
    \path (p2) edge [loop below]             node {$a$} (p2);
    \path (p2) edge [bend left=15]             node {$b$} (p3);
    \path (p3) edge [loop below]             node {$a$} (p3);
    \path (p3) edge [bend left=15]             node {$b$} (p2);

    \node[state,initial]    (pp1) [right of=Q1, node distance=9.2cm]       {$\statepp{1}$};
    \node[state]    (pp2) [right of=pp1]       {$\statepp{2}$};
    \node[state,accepting]    (pp3) [below of=pp1]       {$\statepp{3}$};

    \node[label]            (labelPe) [above of=pp1, node distance=1.5cm] {$\aut{P}_b:$};

    \path (pp1) edge               node {$a$} (pp2);
    \path (pp1) edge               node [left] {$b$} (pp3);
    \path (pp2) edge [loop below]             node {$a$} (pp2);
    \path (pp2) edge [bend left=15]             node {$b$} (pp3);
    \path (pp3) edge [loop below]             node {$a,b$} (pp3);

    \node[state,initial]    (ppp1) [right of=Q1, node distance=14.5cm]       {$\stateppp{1}$};
    \node[state]    (ppp2) [right of=ppp1]       {$\stateppp{2}$};
    \node[state,accepting]    (ppp3) [below of=ppp2]  {$\stateppp{3}$};
    \node[state,accepting]    (ppp4) [below of=ppp1]       {$\stateppp{4}$};

    \node[label]            (labelPe) [above of=ppp1, node distance=1.5cm] {$\aut{P}'_\epsilon:$};

    \path (ppp1) edge               node {$a$} (ppp2);
    \path (ppp1) edge               node [left] {$b$} (ppp4);
    \path (ppp2) edge              node {$a$} (ppp3);
    \path (ppp2) edge              node {$b$} (ppp4);
    \path (ppp4) edge [loop below]             node {$a,b$} (ppp4);
    \path (ppp3) edge  [loop below]            node {$a$} (ppp3);
    \path (ppp3) edge              node {$b$} (ppp4);

   \end{tikzpicture}}
\end{center}
\vspace{-5mm}
\caption{Two FDFAs $\aut{F}_1=(\aut{Q},\{\aut{P}_\epsilon,\aut{P}_b\})$ and $\aut{F}_2=(\aut{Q},\{\aut{P}'_\epsilon,\aut{P}_b\})$ for the language $(\Sigma^*b)^\omega\cup(bb)^*a^\omega$ using normalized and duo-normalized acceptances, respectively.}\label{fig:simple-fdfa}
\end{figure}

\begin{definition}[Exact, Normalized, and Duo-Normalized acceptance]
Let $\aut{F}=(\aut{Q},\{\prog{q}{}\}_{q\in Q})$ be an FDFA, $u{\in}\Sigma^*$, $v{\in}\Sigma^+$. 
We define three types of acceptance conditions:
We say that $(u,v)\in\sema{\aut{F}}$ using \emph{exact-acceptance} if
$v\in\sema{\prog{u}{}}$.
 We say that $(u,v)\in\sema{\aut{F}}$ using \emph{normalized} (resp. duo-normalized) acceptance if there exists a normalized (resp. duo-normalized) decomposition $(x,y)$ of $uv^\omega$ such that $y\in\sema{\prog{x}{}}$.
\end{definition}
An FDFA $\aut{F}$ is said to be $\textsc{a}$-\emph{saturated} if for every ultimately periodic word $w$, all its $\textsc{a}$-decompositions agree on membership in $\aut{F}$.
Assuming saturation, and an efficient $\textsc{a}$-normalization process (as suggested by \autoref{clm:duo-exists})
we can alternatively define $\textsc{a}$-acceptance as in~\cite{AngluinF16} using the efficient procedure that given any $(u,v)$ returns a particular $(x,y)$ that is $\textsc{a}$-normalized and satisfies $uv^\omega=xy^\omega$.
Henceforth, all FDFAs are presumed to be saturated.

\begin{restatable}{claim}{clmduoexists}\label{clm:duo-exists}
For every $x \in\Sigma^*$ and $y \in\Sigma^+$ the word $xy^\omega$ has an 
$\textsc{a}$-decomposition of the form $(xy^i,y^j)$ where $i$ and $j$ are of size quadratic in $\aut{F}$ for every $\textsc{a}\in\{$exact, normalized, duo-normalized$\}$.
\end{restatable}

Note that if $\aut{F}$ is saturated using exact-acceptance and it recognizes $L$, then using normalized-acceptance it is also saturated and recognizes $L$. If $\aut{F}$ is saturated using normalized-acceptance and it recognizes $L$, 
then using duo-normalized acceptance it is also saturated and recognizes $L$.

\begin{example}
\autoref{fig:simple-fdfa} shows 
two FDFAs. The FDFA $\aut{F}_1=(\aut{Q},\{\aut{P}_\epsilon,\aut{P}_b\})$ has a leading automaton with two states $[\epsilon]$ and $[b]$, and the corresponding progress automata are $\aut{P}_\epsilon$ and $\aut{P}_b$. Consider the word $a^\omega$. Since $(\epsilon,a)$ is a normalized decomposition of $a^\omega$  and $a\in\sema{\aut{P}_\epsilon}$, using normalized acceptance $a^\omega$ is accepted in $\aut{F}_1$. The FDFA $\aut{F}_2=(\aut{Q},\{\aut{P}'_\epsilon,\aut{P}_b\})$ uses duo-normalization and $\aut{P}'_\epsilon$ as the progress DFA for $[\epsilon]$.
 Since $(\epsilon,a)$ and $(\epsilon,aa)$ are resp. a normalized and duo-normalized decomposition of $a^\omega$ wrt $\aut{F}_2$, using normalized acceptance $a^\omega$ should be rejected, but using duo-normalization it should be accepted. In this example the FDFA using duo-normalization has more states. Later on we provide an example where an FDFA using duo-normalization has  fewer states, and even exponentially fewer.
\end{example}

\begin{restatable}{theorem}{clmduonormmainainsallgood}\label{clm:duo-norm-good-prop}
The following holds for saturated FDFAs using  duo-normalized acceptance.
Complementation can be computed in constant space; intersection, union and membership can be computed in logarithmic space; and emptiness, universality, containment and equivalence can be computed in non-deterministic logarithmic space.
\end{restatable}

From now on, unless otherwise stated, we work with duo-normalization. That is when we say $(u,v)\in\sema{\aut{F}}$ or $w\in\sema{\aut{F}}$ we mean according to duo-normalization.

\section{The Diameter Measure - A Robust Measure on FDFAs}\label{sec:meas}

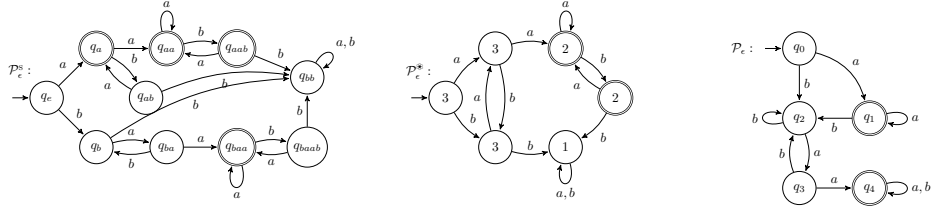
\begin{figure}
\begin{center}
\scalebox{0.5}{
    \begin{tikzpicture}[->,>=stealth',shorten >=1pt,auto,node distance=1.85cm,semithick,initial text=, initial left]

    \node[state,initial]    (e)                    {\state{e}};
    \node[state, accepting] (a) [above right of=e] {\state{a}};
    \node[state,accepting]  (aa) [right of=a]      {\state{aa}};
    \node[state]            (ab) [below right of=a] {\state{ab}};
    \node[state,accepting]  (aab) [right of=aa]     {\state{aab}};   
    \node[state]            (b) [below right of=e] {\state{b}};
    \node[state]            (ba) [right of=b]      {\state{ba}};
    \node[state,accepting]  (baa) [right of=ba]    {\state{baa}};    
    \node[state]            (baab) [right of=baa]  {\state{baab}};     
    \node[state]            (bb) [above of=baab]   {\state{bb}};
    \node[label]            (alpha) [above left of=e, node distance=1cm] {$\aut{P}^{\syntactic}_\epsilon:$};

    \path (e) edge              node {$a$} (a);
    \path (e) edge              node {$b$} (b);
    \path (a) edge    node  {$a$} (aa);
    \path (a) edge  [bend left=15]  node {$b$} (ab);	
    \path (aa) edge [loop above]   node {$a$} (aa);
    \path (aa) edge [bend left=15]      node {$b$} (aab);
    \path (aab) edge  [bend left=15]    node {$a$} (aa);
    \path (aab) edge      node {$b$} (bb);
    \path (ab) edge [bend left=15]  node {$a$} (a);
    \path (ab) edge  [bend left=15]    node [below] {$b$} (bb);
    \path (b) edge   [bend left=15]              node {$a$} (ba);
    \path (b) edge    [bend left=15]            node [below] {$b$} (bb);            
    \path (ba) edge                 node   {$a$} (baa);	
    \path (ba) edge  [bend left=15]  node {$b$} (b);
    \path (baa) edge    [loop below]             node  {$a$} (baa);
    \path (baa) edge   [bend left=15]               node  {$b$} (baab);
    \path (baab) edge  [bend left=15]                node  {$a$} (baa);
    \path (baab) edge                 node  {$b$} (bb);
    \path (bb) edge   [in=30,out=60, loop,looseness=7]    node  {$a,b$} (bb);

    \node[state,initial]    (qe)  [right of=e, node distance=10.5cm]  {3};
    \node[state] (qa)  [above right of=qe] {3};
    \node[state, accepting] (qaa)  [right of=qa] {2}; 
    \node[state,accepting]  (qaab)[below right of=qaa]     {2};
    \node[state]            (qb)  [below right of=qe] {3};
    \node[state]  (qbb) [right of=qb]       {1};   
    \node[label]         (qalpha) [above left of=qe, node distance=1cm] {$\aut{P}^{\colorful}_\epsilon:$};

    \path (qe) edge   [bend left=15]           node {$a$} (qa);
    \path (qe) edge   [bend right=15]           node {$b$} (qb);
    \path (qa) edge   [bend left=15] node  {$a$} (qaa);
    \path (qa) edge  [bend left=15]  node {$b$} (qb);	
    \path (qb) edge  [bend left=15]  node  {$a$} (qa);
    \path (qb) edge  [bend right=15]  node {$b$} (qbb);	
    \path (qaa) edge  [loop above]  node  {$a$} (qaa);
    \path (qaa) edge  [bend left=15]  node {$b$} (qaab);	
    \path (qaab) edge  [bend left=15]  node  {$a$} (qaa);
    \path (qaab) edge  [bend left=15]  node {$b$} (qbb);	
    \path (qbb) edge   [loop below]              node  {$a,b$} (qbb);

    \node[state,initial]    (sqe)    [right of=qa, node distance=8cm]    {$q_0$};
    \node[state]    (sqab) [below  of=sqe]       {$q_2$};
    \node[state,accepting]    (sqa) [ right of=sqab]       {$q_1$};
    \node[state]    (sqaba) [below of=sqab]       {$q_3$};
    \node[state,accepting]    (sqabaa) [right of=sqaba]       {$q_4$};
    \node[label]            (slabelQ) [left of=sqe, node distance=1.5cm] {$\aut{P}_\epsilon:$};
    
    \path (sqe) edge [bend left]              node {$a$} (sqa);
    \path (sqe) edge             node {$b$} (sqab);
    
    \path (sqa) edge [loop right]            node {$a$} (sqa);
    \path (sqa) edge             node {$b$} (sqab);
    
    \path (sqab) edge  [bend left=20]           node {$a$} (sqaba);
    \path (sqab) edge [loop left]            node {$b$} (sqab);
    
    \path (sqaba) edge               node {$a$} (sqabaa);
    \path (sqaba) edge  [bend left=20]             node {$b$} (sqab); 
    
    \path (sqabaa) edge [loop right]             node {$a,b$} (sqabaa);
    \end{tikzpicture}}
\end{center}	
\caption{Left, Middle: Two FDFAs $\aut{F}^{\syntactic}=(\aut{Q},\{\aut{P}^{\syntactic}_\epsilon\})$ and $\aut{F}^{\colorful}=(\aut{Q},\{\aut{P}^{\colorful}_\epsilon\})$ for the language $L_{\infty aa \wedge \neg \infty bb}$ where $\aut{Q}$ is a one-state leading automaton. $\aut{F}^\syntactic$ uses normalized acceptance, $\aut{F}^\colorful$ uses duo-normalized acceptance.
Right: The progress DFA $\aut{P}_\epsilon$ for an FDFA accepting $\infty aa$ that uses duo-normalization and a one-state leading automaton.}\label{fig:inf-aa-fin-bb-fdfas}\label{fig:duo-nor-chain-longer-than-wagner}
\end{figure}

In the following section we define a measure on FDFAs that is tightly related to the inclusion measure of the Wagner hierarchy. The defined measure is robust among FDFAs in the same way that the inclusion measure is robust among DMAs. That is, every pair of FDFAs $\aut{F}_1$ and $\aut{F}_2$ recognizing the same language agree on this measure.

To devise this measure we would like to understand what does
 the inclusion measure entail on an FDFA for the language.  If a DMA has an inclusion chain $S_1\subset S_2 \subset S_3$
there is a state $q_u$ in $S_1$ reachable by some word $u$ from which there are words $v_1,v_2,v_3$ looping on $q_u$ while traversing the states of $S_1$, $S_2$ and $S_3$, respectively (all and only these states). Assume $S_1$ is negative, then $u(v_1)^\omega\notin L$, $u(v_2)^\omega\in L$ and $u(v_3)^\omega\notin L$. 
Since they all loop back to $q_u$,  we have that $u(v_1v_2)^\omega$ also loops in $S_2$ and is thus accepted and $u(v_1v_2v_3)^\omega$ also loops in $S_3$ and is thus rejected. 
Since $v_1\prec v_1v_2 \prec v_1v_2v_3$ (where $\prec$ denotes the prefix relation) tracing the run on $v_1v_2v_3$ in a progress DFA for $u$  we expect the state reached after $v_1$ to be rejecting, the one after $v_1v_2$ to be accepting and the one after $v_1v_2v_3$ to be rejecting. 
To be precise, we should expect this only if the words $v_1$, $v_1v_2$ and $v_1v_2v_3$ are $\textsc{a}$-normalized, where $\textsc{a}$ is the normalization used by the  FDFA.

Let's inspect this on our running example $L_{\infty aa \wedge \neg \infty bb}$ with the DMA in \autoref{fig:inf-aa-fin-bb-dma}
and the inclusion chain $S_1=\{\state{1},\state{2}\}$, $S_2=
\{\state{1},\state{2},\state{3}\}$, and $S_3=
\{\state{1},\state{2},\state{3},\state{4}\}$. 
We can choose $q_1$ for the pivot state $q_u$ of $S_1$ and the words $v_1=ba$, $v_2=aba$ and $v_3=abba$, that loop respectively in $S_1,S_2,S_3$.
\autoref{fig:inf-aa-fin-bb-fdfas} (Left, Middle) provides two FDFAs for this language.
The FDFA
$\aut{F}^\syntactic$ 
uses normalization and $\aut{F}^\colorful$ uses duo-normalization. 
Looking at $\aut{P}^\syntactic_\epsilon$,
we see that $v_1$, $v_1v_2$ and $v_1v_2v_3$ are normalized and the acceptance of the states we land in after reading them ($\state{ba}$, $\state{baa}$, $\state{bb}$) are rejected/accepted as we expect. The same is true for $\aut{P}^\colorful_\epsilon$. 

Should we entail from this discussion and example that the maximal number of alternations between rejecting and accepting states along any path in an FDFA for a language in $\DMnk{k}$ to be at most $k-1$? This is true in the progress DFA $\aut{P}^{\colorful}_{\epsilon}$, but
the progress DFA $\aut{P}^\syntactic_\epsilon$ clearly refutes it, since it has strongly connected accepting and rejecting states (e.g. $\state{a}$ and $\state{ab}$) and so we can create paths with an unbounded number of alternations between them. 

Take such a path with say $k+1$ prefixes
$z_1\prec z_2 \prec \ldots \prec z_{k+1}$
alternating between accepting and rejecting states. Are the words $z_i$'s normalized?
They can be. Take for instance $z_1=a$, $z_2=ab$, and so on ($z_{k+1}=z_k \cdot b$ if $k$ is even and $z_{k+1}=z_k \cdot a$ otherwise).

Can they all be duo-normalized? 
They can be as we show in \autoref{fig:duo-nor-chain-longer-than-wagner} (Right).
It shows a progress DFA $\aut{P}_\epsilon$ for an FDFA using duo-normalization and a one-state leading automaton. The FDFA recognizes the language $\infty aa$. The words $a \prec ab\prec abaa$ are all duo-normalized, and pass through alternating accepting/rejecting states though the language is in $\DMnk{2}$ and uses only two colors.
To fix this issue we introduce the notion of a \emph{persistent decomposition}.

\begin{definition}[persistent decomposition wrt an FDFA]
Let $u\in\Sigma^*$, $v\in\Sigma^+$.
Let $\aut{F}=(\aut{Q},\{\prog{q}{}\}_{q\in Q})$ be an FDFA.
A duo-normalized decomposition $(u,v)$ is \emph{persistent} if for every $z\in\Sigma^*$ there exists an $i$ such that  $\aut{Q}_u(zv)= \aut{Q}_u(zv^i)$.
\footnote{As in \autoref{clm:duo-exists} there exists $i$ and $j$ of size quadratic in $\aut{F}$ such that $(uv^i,v^j)$ is persistent. See \autoref{clm:perst-exists}.}
\end{definition}

One might try to define an additional acceptance condition using the persistent-normalization in the same manner.
But as stated by the following claim this is unnecessary.
\begin{restatable}{claim}{clmpersaccunnecessary}\label{clm:pers-acc-un-necessary}
Every FDFA defined with persistent-normalized acceptance recognizes the same language when defined with duo-normalized acceptance instead. 
\end{restatable}

Back to the issue of finding the position on the Wanger Hierarchy, we can look for chains of prefixes with alternating acceptance in the language such that each prefix in the chain is persistent.
\begin{definition}[Persistent Chain]
Let $\aut{F}$ be an FDFA and $u\in\Sigma^*$. 
We say that $v_1\prec v_2 \prec \ldots \prec v_k$ for $v_i\in\Sigma^*$
is a \emph{persistent chain of length $k$} wrt $u$ in $\aut{F}$
if $(u,v_i)$ is persistent for every $1\leq i \leq k$  and $(u,v_{i+1})\in \aut{F}$  iff  $(u,v_i)\notin \aut{F}$ for every $1\leq i < k$. We say the chain is \emph{positive} (resp. \emph{negative}) if $(u,v_1)\in\aut{F}$ (resp. $(u,v_1)\notin\aut{F}$).
\end{definition}

We are now ready to state the measure on FDFA that relates them to the Wagner Hierarchy.
\begin{definition}[The Diameter Measure]
We define the \emph{positive} (resp. \emph{negative}) \emph{diameter measure} of a progress DFA $\aut{D}$, denoted $\pdm{\aut{D}}$ (resp. $\ndm{\aut{D}}$) as the maximal $k$ such that there exists a \emph{positive} (resp. negative) persistent chain of length $k$ in $\aut{D}$. 
We define $\pdm{\aut{F}}$ as $\max \{ \pdm{\aut{P}_q}\colon {q\in Q}\}$ and   $\ndm{\aut{F}}$ as $\max \{ \ndm{\aut{P}_q} \colon {q\in Q}\}$.
\end{definition}

We show that the diameter measure is robust among any FDFAs for the language by relating it first to the Wagner hierarchy as formally stated below.

\begin{restatable}[Correlation to Wagner's Hierarchy]{theorem}{thmfdfawagnercor}\label{thm:fdfa-wagner-cor}
Let $\aut{F}$ be an FDFA using any of the acceptance types $\textsc{a}\in\{$exact, normalized, duo-normalized$\}$.
\begin{itemize}
    \item   $\sema{\aut{F}}\in\DMpmk{k}~$ iff $~\pdm{\aut{F}}\leq k \mbox{ and } \ndm{\aut{F}}\leq k $
    
    \item $\sema{\aut{F}}\in\DMpk{k}~$ iff $~\pdm{\aut{F}}\leq k \mbox{ and }  \ndm{\aut{F}}< k$
    
    \item $\sema{\aut{F}}\in\DMnk{k}~$ iff $~\ndm{\aut{F}}< k \mbox{ and }  \ndm{\aut{F}}\leq k$
\end{itemize} 
\end{restatable}

The proof of \autoref{thm:fdfa-wagner-cor} relies on the following lemma.
\begin{restatable}
[Pumping Persistent Periods]{lemma}{clmperspump}\label{clm:pers-pump}
    Let $\aut{A}$ be an automaton and let $v \in \Sigma^+$ be $\aut{A}$-persistent. \footnote{We say that $v$ is $\aut{A}$-persistent 
    if $\aut{A}(v^2)=\aut{A}(v)$ and for every $z \in \Sigma^*$ there exists an $i$ such that  $\aut{A}(zv) = \aut{A}(zv^i)$.
    } 
    There exists $k \in \mathbb{N}$ such that for every extension $z \in \Sigma^*$, if $vz$ is $\aut{A}$-persistent ($\aut{A}$-duo-normalized) then  
    $\autstate{A}{vz}=\autstate{A}{v^{1+k}z}$ and $v^{1+k}z$ is also persistent (duo-normalized). The same is true for $v^{1+ik}z$ for all $i\geq 1$.
\end{restatable}
Recall that a word $v$ is persistent if it closes a loop on the state it reaches (from the initial state of the respective progress automaton $\prog{u}{}$) and from every other state, upon reading $v$ it reaches its final loop when reading $v$ repetitively.
The proof of the lemma essentially shows that we can strengthen this by finding an $i$ such that, from every state, $v^i$ reaches a state on which it immediately closes a loop.

\begin{proof}[Proof of \autoref{thm:fdfa-wagner-cor}]
We prove the claim regarding the positive measure. The claim regarding the negative measure is proven symmetrically.
We show that \begin{inparaenum}
\item 
$\picm{L}\geq k$ implies $\pdm{\aut{F}}\geq k$ and
\item
$\pdm{\aut{F}}\geq k$ implies $\picm{L}\geq k$.
\end{inparaenum}
The two claims together entail that $\pdm{\aut{F}}=\picm{L}$.

\begin{inparaenum}
\item 
We start by showing that if $\picm{L}\geq k$ then $\pdm{\aut{F}}\geq k$.
From $\picm{L}\geq k$ we know that there exists an MSCC of $\aut{M}$, a DMA for $L$, subsuming SCCs  $S_1,S_2,\ldots, S_k$
such that $S_1 \subsetneq S_2 \subsetneq \ldots \subsetneq S_k$ 
and $S_i$ is an accepting component if and only if $i$ is odd.
Pick a state $s$ in $S_1$. For $1\leq i \leq k$ let $v_i$
be a word that loops on $s$ while traversing all the states of $S_i$ and no other states. Let $u$ be a word reaching $s$ from the initial state. Consider the progress DFA of $u$, $\prog{u}{}$. 
By \autoref{clm:perst-exists} there exists $l_1$ such that $y_1 = (v_1)^{l_1}$ is $\prog{u}{}$-persistent. Similarly, there exists $l_2$ such that $y_2 = (y_1v_2)^{l_2}$ is $\prog{u}{}$-persistent.
In the same way we define $y_i=(y_{i-1}v_i)^{l_i}$ for every $i\leq k$.
Consider the words $w_i=u(y_i)^\omega$ for $1\leq i \leq k$. Since the set of states visited infinitely often when reading $w_i$ is exactly $S_i$ it follows that $w_i$ is in $L$ if and only if $i$ is odd.
Since all the infixes $y_i$ loop back to $s$ it follows that all the $y_i$'s are $u$-invariants and thus persistent in $\aut{F}$. 
Since $y_1\prec y_2 \prec \ldots \prec y_k$
we have found a positive alternating persistent chain of length $k$.
Hence, $\pdm{\prog{u}{}}\geq k$ which in turn gives that $\pdm{\aut{F}}\geq k$.

\item Next we show that 
if $\pdm{\aut{F}}\geq k$ then $\picm{L}\geq k$.
Assume for some $u$ we have $\pdm{\prog{u}{}}\geq k$. Then there exists 
a persistent chain $v_1 \prec v_2 \prec \ldots \prec v_k$ of length $k$ in $\prog{u}{}$, starting with an accepting state.
Let $q_1,q_{2},\ldots,q_k$ be alternating states in such a path.  
Then $q_i$ is accepting iff $i$ is odd.

Let $\aut{M}$ be a DMA for $L$ and assume $\aut{M}$ has $n$ states. Let $k_i$ be the number promised by \autoref{clm:pers-pump} for $v_i$, and let  $l_i>n$ be such that $l_i=1+j_ik_i$ for some $j_i\in\mathbb{N}$. 
Consider $(v_1)^{l_1}$. Since $v_1$ is persistent it follows that $v_1$ loops on $q_1$ and hence so does $(v_1)^{l_1}$. Since the run on $v_2$ passes through $q_1$ after reading the prefix $v_1$, it follows that $((v_1)^{l_1} v_{2})$ reaches $q_2$, the same state that $v_2$ does, and is also persistent. 
Continuing in the same manner, let $y_i=
(((v_1)^{l_1} v_{2})^{l_2} \ldots v_i)^{l_i}$
then $y_i$ us persistent.
If follows that $w_i=u(v_i)^\omega$ is in $L$ iff $i$ is odd.
For every such $i$, consider the run of $\aut{M}$ on $w_i$, and let $\inf(w_i)=S_i$ be the states of the SCC that $\aut{M}$ eventually traverses in.
Since (1) $y_1 \prec y_2 \prec \ldots \prec y_k$ and (2) in $y_i$ the word $y_{i-1}$ is repeated at least $n$ times and (3)  
$n$ bounds the number of states of $\aut{M}$ it follows that $S_1 \subseteq S_2 \subseteq \ldots \subseteq S_k$.
From the acceptance of the words $w_i$ we conclude $S_i$ is accepting iff $i$ is odd.
Therefore the inclusions are strict, namely $S_1 \subsetneq S_2 \subsetneq \ldots \subsetneq S_k$.
This proves $\picm{L}\geq k$.
\end{inparaenum}
\end{proof}

With this theorem in place we conclude  the robustness of the diameter measure.

\begin{corollary}
    Let $\aut{F}_1$ and $\aut{F}_2$ be two FDFAs recognizing the same language. Then $\pdm{\aut{F}_1}=\pdm{\aut{F}_2}$ and $\ndm{\aut{F}_1}=\ndm{\aut{F}_2}$.
\end{corollary}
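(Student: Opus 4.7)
The plan is to deduce the corollary by showing that \autoref{thm:fdfa-wagner-cor} combined with Wagner's robustness (\autoref{thm:wagner-robust}) pins the pair $(\pdm{\aut{F}}, \ndm{\aut{F}})$ down as a function of $L$ alone. Let $\aut{F}$ be any FDFA for $L$ and write $p = \pdm{\aut{F}}$, $n = \ndm{\aut{F}}$. By \autoref{thm:wagner-robust} membership of $L$ in each class $\DMpk{k}$, $\DMnk{k}$, $\DMpmk{k}$ is a language invariant, and by \autoref{thm:fdfa-wagner-cor} each such membership is characterized entirely through $p$ and $n$. So it suffices to show these invariants determine $(p,n)$ uniquely.

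First I would establish the bound $|p - n| \leq 1$. Given a positive persistent chain $v_1 \prec v_2 \prec \cdots \prec v_k$ of length $k = p$ witnessed by some $u \in \Sigma^*$, the tail $v_2 \prec \cdots \prec v_k$ is itself a negative persistent chain of length $k-1$: each $v_i$ remains persistent with respect to $u$ (persistence is a property of the individual pair $(u,v_i)$, independent of the chain), the alternation of acceptance is unchanged, and the first pair $(u, v_2)$ is now rejecting. Hence $n \geq p - 1$, and the symmetric argument applied to a negative chain yields $p \geq n - 1$.

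Second, the $\DMpmk{k}$ equivalence in \autoref{thm:fdfa-wagner-cor} together with \autoref{thm:wagner-robust} gives $\max(p, n) = \min\{k : L \in \DMpmk{k}\}$, a language invariant which I denote $r$. The $\DMpk{r}$ equivalence then yields $L \in \DMpk{r}$ iff $n < p$, since $p \leq r$ holds automatically; by Wagner this is again a language invariant, and symmetrically for $\DMnk{r}$. Hence which of $p > n$, $p < n$, or $p = n$ holds is determined by $L$ alone. Combined with $\max(p, n) = r$ and $|p - n| \leq 1$, this fixes $(p, n)$ uniquely in terms of $L$ (explicitly, the dominating coordinate equals $r$ and the other equals $r$ or $r-1$ depending on the sign), giving $(\pdm{\aut{F}_1}, \ndm{\aut{F}_1}) = (\pdm{\aut{F}_2}, \ndm{\aut{F}_2})$ as required. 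The only non-trivial ingredient is the bound in the first step; everything else is bookkeeping on \autoref{thm:fdfa-wagner-cor} and \autoref{thm:wagner-robust}.
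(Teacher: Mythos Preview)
Your argument is correct. The route differs somewhat from the paper's: in proving \autoref{thm:fdfa-wagner-cor}, the paper actually establishes the stronger equality $\pdm{\aut{F}}=\picm{L}$ (and dually $\ndm{\aut{F}}=\nicm{L}$), from which the corollary is immediate by \autoref{thm:wagner-robust}. You instead work only from the \emph{statement} of \autoref{thm:fdfa-wagner-cor}, which characterizes class membership through inequalities on $(p,n)$; since those inequalities by themselves do not pin down $(p,n)$ (e.g.\ $(p,n)=(3,2)$ and $(3,1)$ yield identical membership patterns in all $\DMpk{k},\DMnk{k},\DMpmk{k}$), you correctly add the bound $|p-n|\leq 1$ via the chain-tail observation. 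This is a valid alternative that avoids looking inside the proof of the theorem. One minor point: your appeal to \autoref{thm:wagner-robust} to justify that membership in $\DMpk{k}$ etc.\ depends only on $L$ is unnecessary, since these classes are \emph{defined} as properties of $L$; Wagner's theorem is about all DMAs sharing the same measure, which you never actually use.
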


Next we show that given an FDFA we can compute its diameter measure in polynomial space. The proof shows that with each progress DFA $\prog{u}{}$ we can associate a DFA $\perstaut{u}$, which we term the \emph{persistent DFA}, which is roughly speaking the product of the leading automaton and a copy of the progress automaton from each of its states. The states 
of $\perstaut{u}$ can be classified into \emph{significant} and \emph{insignificant} 
where significant states are those that are reached by persistent words and only such words. The classification can easily be done by inspecting the `state vector'. The persistent DFA need not be built, instead a persistent chain can be non-deterministically guessed and verified in polynomial space (see full proof in the appendix).

\begin{restatable}{theorem}{thmcompdiametermeas}\label{thm:comp-diameter-meas}
The diameter measure of an FDFA can be computed in PSPACE.
\end{restatable}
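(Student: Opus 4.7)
The plan is, for each state $q$ of the leading automaton $\aut{Q}$ with canonical representative $u$, to implicitly construct a product-style DFA $\perstaut{u}$ whose states are pairs $(p, f)$: the component $p$ records the leading-automaton state reached, and $f$ is the function $Q_{\prog{u}{}} \to Q_{\prog{u}{}}$ that maps each state $q'$ of $\prog{u}{}$ to $\delta_{\prog{u}{}}(q', v)$, where $v$ is the word read so far. The initial state is $(q, \mathrm{id})$, and reading a letter $\sigma$ sends $(p, f)$ to $(\delta_{\aut{Q}}(p, \sigma), f')$ where $f'(q') = \delta_{\prog{u}{}}(f(q'), \sigma)$. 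Each state of $\perstaut{u}$ admits a description of size polynomial in $|\aut{F}|$, even though the total number of reachable states may be exponential.

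I would then characterise the \emph{significant} states of $\perstaut{u}$ by purely local inspection, proving: $(p,f)$ is reached by some word $v$ for which $(u,v)$ is persistent iff (i) $p = q$, (ii) $f$ is a bijection on $Q_{\prog{u}{}}$, and (iii) $f$ fixes the initial state of $\prog{u}{}$. One direction follows by unwinding definitions: persistence of $(u,v)$ gives $\autstate{\aut{Q}}{uv}=\autstate{\aut{Q}}{u}$, puts every state of $\prog{u}{}$ on a cycle of $f$ (the weak-normalisation-from-every-state condition), and together with duo-normalisation forces $f$ to fix the initial state since $f(q_0)=f^2(q_0)$ and $f$ is a bijection. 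For the converse, once $f$ is a bijection on a finite set it has a finite order $n$, so $f^n = \mathrm{id}$ simultaneously witnesses weak-normalisation from every state. Crucially, the acceptance bit of $(u,v)$ is locally visible as whether $f$ sends the initial state of $\prog{u}{}$ into its accepting set.

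Consequently, a persistent chain of length $k$ wrt $u$ corresponds exactly to a trajectory of $\perstaut{u}$ starting at $(q, \mathrm{id})$ that visits $k$ significant states with alternating acceptance parities, and $\pdm{\aut{F}}$ (resp.\ $\ndm{\aut{F}}$) is the maximum such $k$ starting with positive (resp.\ negative) parity, taken over all $q$. The algorithm non-deterministically guesses $q$ and walks through $\perstaut{u}$ one letter at a time, storing only the current state, a chain-counter, and the next expected parity; each time a significant state of the right parity is entered, the counter is incremented and the parity is flipped. The reachable part of $\perstaut{u}$ has at most $|Q|\cdot|Q_{\prog{u}{}}|^{|Q_{\prog{u}{}}|}$ states, i.e.\ exponential in $|\aut{F}|$, so both the counter and each candidate value of the diameter measure fit in polynomial bits; binary search on the value together with Savitch's theorem converts the procedure into a deterministic PSPACE algorithm that outputs the exact values of $\pdm{\aut{F}}$ and $\ndm{\aut{F}}$. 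The main obstacle is precisely the significant-state characterisation—most delicately, the direction that the three local conditions on $(p,f)$ suffice to guarantee the existence of a driving word that is persistent, where the finite-order argument for bijections on a finite set carries the weight.
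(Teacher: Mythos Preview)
Your construction mirrors the paper's proof almost exactly: the product automaton $\perstaut{u}$ (your pairs $(p,f)$ are the paper's vectors $(j,i_1,\ldots,i_n)$ repackaged), the local ``significant state'' test, the acceptance bit read from $f(q_0)$, and the nondeterministic guess-and-walk followed by Savitch's theorem all match.

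There is, however, a genuine gap in the forward direction of your significant-state characterisation---and the paper's equally terse sketch shares it. You assert that persistence ``puts every state of $\prog{u}{}$ on a cycle of $f$'', but unwinding the weak-normalisation clause for a state $q'$ gives only $f(q')=f^{i+1}(q')$ for some $i\ge 1$, i.e.\ $f(q')$ lies on a cycle, not $q'$ itself. A two-state $\prog{u}{}$ with a letter $a$ sending both states to state $2$ already yields a persistent $(u,a)$ with $f_a$ non-bijective. Worse, your three conditions taken together force $f(q_0)=q_0$ (you derive this yourself from $f(q_0)=f^2(q_0)$ and injectivity), so the acceptance bit ``is $f(q_0)\in F$?'' collapses to the constant ``is $q_0\in F$?''; no two significant states can then disagree in parity, and your alternation count is stuck at~$1$.

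The repair is to replace ``$f$ is a bijection'' by the weaker local condition $\mathrm{im}(f)=\mathrm{im}(f^2)$ (equivalently, $f$ restricted to its image is a permutation), and to keep the duo-normalisation test as $f^2(q_0)=f(q_0)$ rather than strengthening it to $f(q_0)=q_0$. With those changes both directions of the characterisation go through, the acceptance bit genuinely varies, and the rest of your argument is sound.
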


We provide a matching lower bound.

\begin{restatable}{theorem}{thmlbcompdiametermeas}\label{thm:lb-diameter-meas}
The problem of determining whether the diameter measure of a duo-normalized FDFA is at least $k$ is PSPACE-hard.
\end{restatable}
\begin{proof}
The proof is by reduction from non-emptiness of intersection of DFAs, which is known to be PSPACE-complete.
Let $D_1,D_2,\ldots,D_k$ be $k$ DFAs over $\Sigma$.
We assume wlog that $D_i$'s do not accept the empty string.
Let $D_i=(\Sigma,Q_i,\iota_i,\delta_i,F_i)$.
We construct an FDFA with a one-state leading automaton and a progress DFA $\aut{P}$ over $\Sigma'=\Sigma \cup\{1,\ldots,k,k\!+\!1\} \cup \{\sharp\}$. The states of $\aut{P}$ are $\{s, s',s'', d\} \cup \{s_1,\ldots,s_k, s_{k+1}\} \cup \bigcup_{1\leq i\leq k} Q_i$.
The initial state of $P$ is $s$.
From $s$ there is a self-loop on $\Sigma'\setminus\{\sharp\}$ and a $\sharp$-transition to $s'$.
From $s'$ there is a $\Sigma$-transition to $s''$.
From $s''$ there is a self-loop on $\Sigma$ and a $1$-transition to $s_1$.
For every $1\leq i \leq k$, state $s_i$ has a self-loop on $\Sigma'\setminus \{\sharp\}$, an $i$-transition to $\iota_i$, and an $(i\!+\!1)$-transition to $s_{i+1}$. The states $q_i\in Q_i$ have all the transitions in $\delta_i$.
States in $F_i$ have in addition a self-loop on every $j<i$ and an $i$-transition back to $s_i$.
The states $s_{k+1}$ and $d$ have a self-loop on all letters.
All missing transitions lead to state $d$.
States of $Q_i$ are accepting iff $i$ is even.
The states $s_i$ are also accepting iff $i$ is even.
States $s,s',s'',d$ are non-accepting.

We claim that if there is a word $v$ in the intersection of all $D_i$'s,
then $\sharp v1 \prec \sharp v12 \prec \sharp v123 \prec \cdots \prec \sharp v12\cdots k \!\cdot\! k\!+\!1$ is a persistent chain.
Let $u_i=\sharp v12\ldots i $ for $1\leq i\leq k$. Then $u_i$ reaches $s_i$, and reading $u_i$ from $s_i$ we reach $s_i$ again. Thus $u_i$ is duo-normalized and $u_i$ is accepted iff $i$ is even. To see that $u_i$ is persistent note that reading $u_i$ from any state $s_j$ (whether $j=i$ or not) we reach $s_j$ since $v$ is accepted by $D_j$. Reading $u_i$ from $s$ we reach $s_1$ from which on reading $u_i$ we reach $s_1$ again.
Reading $u_i$ from any other state we reach $s_{k+1}$ and stay there forever. Thus $u_i$ is persistent for any $i$. Hence $u_1 \prec u_2 \prec u_{k+1}$ is a persistent chain of length $k+1$.

For the other direction, we claim that if there exists a persistent chain of length $k+1$ in $\aut{P}$ then the intersection of all $D_i$'s is non-empty. First, note that a period $w$ passes through $s_j$ iff $w$ contains $j$. From the structure of $\aut{P}$ it easily follows that if $w_1 \prec w_2 \prec \cdots \prec w_{k+1}$ is 
a persistent chain then $w_{k+1}$ passes through $s_{k+1}$.
Second, note that $w$ is accepted by the FDFA iff $w$ is of the form $x\sharp y\cdot 1 v_1\cdot 2 v_2 \cdot 3 v_3 \cdots i v_i$ for some even $i$ such that $x$ can be anything,  $y\in\Sigma^+$, 
$v_i$ is in $\textsl{Pref}((\Sigma'\setminus \{\sharp\})^*\sharp\, \Sigma^* \{1,2,\ldots, i\!-\!1\}^* i)$ and moreover, if $v_i = x_1 \sharp y_1 n_1 \cdot x_2 \sharp y_2 n_2 \cdots x_m \sharp y_m n_m \cdot x_{m+1} \sharp y_{m+1}$ 
where $x_i\in(\Sigma'\setminus \{\sharp\})^*$, $y_i\in\Sigma^+$ and $n_i\in\{1,2,\ldots,i\!-\!1\}^* i$ then $y$ and all the $y_i$'s for $1\leq i\leq m$ are in the language of $D_i$. Thus the last element $w_{k+1}$ of any persistent chain of length $k+1$ must start with $x\sharp y$ such that $y\in\Sigma^+$ and $y$ is accepted by all $D_i$'s.

Since we consider only saturated FDFAs we have to show that the resulting FDFA is saturated. That is, it accepts any duo-normalized representation $(u,v)$ of a word $uv^\omega$ in the language. To see that it is saturated note that the above description of the accepted periods does not depend on a particular decomposition to transient and periodic parts.
Thus, if $w=u(vz)^\omega=uv(zv)^i(zv)^j$, then $vz$ satisfies the above iff $(zv)^j$ satisfies the above.
\end{proof}

\begin{corollary}
    The problem of determining the position of an FDFA using duo-normalization in the Wagner hierarchy is PSPACE-complete.
\end{corollary}

The following proposition shows that there exists FDFAs where the smallest elements of 
a persistent chain are inevitably of exponential size.
The idea is to take the construction of the FDFA in the proof of \autoref{thm:lb-diameter-meas} and let the DFA $D_i$ count modulo the $i$-th prime.

\begin{restatable}{proposition}{propchainelmentexp}\label{prop:chain-elment-exp}
    There exists a family of languages $\{L_{n}\}$ with an FDFA with number of states polynomial in $n$ where
    any persistent chain of maximal length must include a word of length exponential in $n$.
\end{restatable}

\section{Relation to Natural Colors and Succinctness Results}\label{sec:nat-col-succ}

\subsection{Natural Colors}
Ehlers and Schewe~\cite{EhlersS22} show that given an $\omega$-regular language $L$ one can associate with every word $w\in\Sigma^\omega$ a natural color. If $w$ is given color $k$ wrt $L$, then the minimal color visited infinitely often by a so-called streamlined parity automaton~\cite{EhlersS22} for $L$ would be $k$, and there is no parity automaton for $L$ that would visit a lower color infinitely often when reading $w$.
Consider again the language $L_{\infty aa \wedge \neg \infty bb}$ requiring infinitely many $aa$ and finitely many $bb$ for which a DPA is given in \autoref{fig:inf-aa-fin-bb-dpa}.  The colors of $(ab)^\omega$, $(a)^\omega$, $(aab)^\omega$, $(aabb)^\omega$ and $(b)^\omega$ are $3$, $2$, $2$, $1$ and $1$, resp. The intuition is that the color is $1$ if $bb$ occurs infinitely often, it is $2$ if $aa$ occurs infinitely often but $bb$ occurs only finitely often, and it is $3$ if neither $aa$ nor $bb$ occur infinitely often.
For the language $L_{\infty abc}$ the color would be $1$ if $abc$ does not occur infinitely often, and $0$ otherwise.
We use $\infcolorof{w}$ to denote the color of $w$ wrt $L$ (which is not explicitly mentioned in the notation to avoid cumbersomeness).

Bohn and L\"oding provide a related definition~\cite[Def 2.]{BL23} that can be viewed as giving colors for finite words $v$ wrt to an $\omega$-regular language $L$ and an equivalence class $[u]$. 
We present it below using slightly different formulations, closer to that of~\cite{EhlersS22}.
We use $\fincolorof{u}{v}$  to denote the color of the finite word $v\in\Sigma^+$ wrt a finite word $u\in\Sigma^*$.
The definition satisfies that $\fincolorof{u}{v}$ returns $\max \{ \infcolorof{u(vz)^\omega} ~|~ z\in\Sigma^*\}$. 
Note that this gives that there might be $v',v''\in\Sigma^+$ such that $u(v')^\omega=u(v'')^\omega$ for some $u\in\Sigma^*$
though $\fincolorof{u}{v'}\neq \fincolorof{u}{v''}$. Indeed in the example of $L_{\infty aa \wedge \neg \infty bb}$ we have $(a)^\omega=(aa)^\omega$, yet $\fincolorof{\epsilon}{a}=3$ while $\fincolorof{\epsilon}{aa}=2$. The reason is that if the period contains $a$ followed by some $z$ the resulting color may be $3$ or $2$ or $1$ while if the period contains $aa$ followed by some $z$ the color can be $2$ or $1$ but it cannot be $3$ since $aa$ surely occurs infinitely often. 

In the formal definition we make use of the notion of relevant and irrelevant periods.
Let
 $u\in\Sigma^*$, and $v\in\Sigma^+$.
 If there exists a suffix $z\in\Sigma^*$ such that $vz$ is $u$-invariant then $v$ is \emph{relevant} to $u$. Otherwise it is \emph{irrelevant}.

We turn to define, given a transient part $u$ and a word $v$, corresponding to a prefix of some period, the color of $v$ wrt to $u$, which we denote
$\fincolorof{u}{v}$.

\begin{definition}[Natural color of finite and infinite words {\cite[Def 2.]{BL23},\cite[Def 1.]{EhlersS22}}]
\label{def:fin-clr}\label{def:inf-clr}
Let
$u\in\Sigma^*$, and $v\in\Sigma^+$. We define the \emph{natural color} of $v$ wrt $u$ in the following way:
\begin{itemize}[nosep]
    \item If $v$ is irrelevant to $u$ then it is $\neginfty$.
    \item Otherwise it is the \emph{minimal} color $c \geq 0$ such that \emph{for all} $z\in\Sigma^*$ for which $vz$ is $u$-invariant it holds that:
    \begin{itemize}[nosep]
        \item Either $u(vz)^\omega\in L$ iff $c$ is even.
        \item Or \emph{exists} $i > 0$ such that $\fincolorof{u}{(vz)^i} < c$.
    \end{itemize}
\end{itemize}
Let $w\in\Sigma^\omega$.
The color of $w$, denoted 
 $\infcolorof{w}$,
 is $\min \{ \fincolorof{u}{v}~|~w=uv^\omega \mbox{ and } v \mbox{ is } u\mbox{-invariant }  \}$. 
\end{definition}

Applying the definition of natural colors for finite words to the language $L_{\infty aa \wedge \neg \infty bb}$ we get  
$\fincolorof{\epsilon}{a}=3$
$\fincolorof{\epsilon}{aa}=2$ and
$\fincolorof{\epsilon}{aabb}=\fincolorof{\epsilon}{bb}=1$ as required.

\begin{remark}
Note that ignoring irrelevant words, the minimal color can be either $1$ or $0$ (e.g.  
as discussed above for $L_{\infty aa \wedge \neg \infty bb}$ it is $1$ and for $L_{\infty abc}$ it is $0$).
\autoref{def:fin-clr} colors irrelevant periods by $\neginfty$. 
It is convenient for the definition, but if we want to keep the number of colors minimal it is not necessary, and we can color such words with the minimal color.
We thus define $\finColorof{u}{v}$ that gives irrelevant words the color $\min\{ c\geq 0 ~\colon~\fincolorof{u}{v}=c,\ v\in\Sigma^+\}$  if this set is non-empty and 
$\min\{ c\geq 0 ~\colon~\fincolorof{u}{v}=c,\ u\in\Sigma^*,v\in\Sigma^+\}$ otherwise.~\footnote{This diverges slightly from the definition in \cite{BL23} where irrelevant words are colored $0$, also if $1$ is the minimal color for a relevant word.}
\end{remark}

\subsection{Reliable Words and Chains}

Using the definition of finite colors we can provide a semantic variation of the diameter measure we have defined in \autoref{sec:meas}.
The inclusion measure of Wagner discusses chains of infinite words that traverse maximal SCCs in an inclusion chain.
The respective semantic measure we seek needs to transform those infinite words to finite periods.
As discussed in \autoref{sec:meas} observing an arbitrary sequence of prefixes $v_1 \prec v_2 \prec \ldots \prec v_k$ such that 
$u(v_i)^\omega\in L$ iff $u(v_{i+1})^\omega\notin L$ does not necessarily relates to a sequence of $k$ alternations in an inclusion chain. We want to capture periods and chains of periods in which this does hold.

We say that period $v$ is \emph{stable} wrt $u$ if \emph{for all} $i > 0$ it holds that $\finColorof{u}{v} = \finColorof{u}{v^i}$. Otherwise it is \emph{unstable}. 
If $v$ is $u$-invariant and stable wrt $u$ we say that $v$ is $u$-\emph{reliable}. 
For instance, in
$L_{\infty aa \wedge \neg \infty bb}$ since $\finColorof{\epsilon}{a}=3$, $\finColorof{\epsilon}{aa}=2$ and $\finColorof{\epsilon}{a^i}=2$ for every $i\geq 2$ we get that the word $a$ is not $\epsilon$-reliable whereas $aa$ is.
A nice property of reliable periods is that they determine the color of the respective infinite word, that is, if $\finColorof{u}{v}=k$ and $v$ is $u$-reliable then $\infcolorof{u(v)^\omega}=k$.\footnote{This can be derived from claims in~\cite{BL23}. Since we use different formulations, we give direct proofs in the appendix, see  \autoref{prop:finite-colors-properties}.}
Indeed, in our example $\infcolorof{a^\omega}=2=\finColorof{\epsilon}{aa}\neq \finColorof{\epsilon}{a}=3$.

\begin{definition}[Reliable chains]
We say that $v_1 \prec v_2 \prec \ldots \prec v_k$ is a reliable chain wrt $L$ and $u$ if all $v_i$'s are reliable and $u(v_i)^\omega\in L$ iff $u(v_i)^\omega\notin L$ for $1\leq i <k$.
\end{definition}

The question is now how does the semantic definition of reliable words and reliable chains relate to the syntactic definitions
of duo-normalized/persistent and their chains. Looking at \autoref{fig:duo-nor-chain-longer-than-wagner} we can see that although $a$ is duo-normalized it is not reliable. The word $aa$ which is persistent is reliable. 
This is not a coincidence: if a word is persistent it is also reliable. This holds for FDFAs with any of the acceptance types. In normalized FDFAs a duo-normalized word is already reliable.

\begin{restatable}[Implying reliability]{proposition}{clmpersimplrel}\label{clm:pers-impl-rel}
Let $\aut{F}$ be an FDFA. 
If  $(u,v)$ is persistent wrt to $\aut{F}$ then $v$ is $u$-reliable, regardless of the acceptance type \textsc{a} of $\aut{F}$.
If $\aut{F}$ uses normalization and  $(u,v)$ is duo-normalized wrt to $\aut{F}$ then $v$ is $u$-reliable. 
\end{restatable}

The other direction is not necessarily true. For instance, $aa$ is reliable for $L_{\infty aa}$ but we can build a progress automaton different than the one in \autoref{fig:duo-nor-chain-longer-than-wagner} where $a$, $aa$, $aaa$ reach different states (e.g. by inserting two states $q_a$ and $q_{aa}$ between $q_0$ and $q_1$ in \autoref{fig:duo-nor-chain-longer-than-wagner}), and so $aa$ won't be persistent. 
However, the existence of a reliable chain in an FDFA $\aut{F}$, implies the existence of a persistent chain in $\aut{F}$, whichever acceptance condition it uses (\autoref{clm:reli-chn-impl-pres-chn}).

\begin{restatable}[A reliable chain implies a persistent chain]{proposition}{clmrelichnimplpreschn}\label{clm:reli-chn-impl-pres-chn} Let $\aut{F} = (\aut{Q},\{\aut{P}\})$ be an FDFA. 
    Let $u \in \Sigma^*$.
    If ${v_1 \prec v_2 \prec \ldots \prec v_d}$ is a reliable chain wrt $u$ then 
    there exists ${y_1 \prec y_2 \prec \ldots \prec y_d}$
    a persistent chain in $\prog{u}{}$. Moreover, $v_1\prec y_1$ and  ${u}(y_i)^\omega\in \sema{\aut{F}}$ iff ${u}(v_i)^\omega\in\sema{\aut{F}}$ for every $1\leq i \leq d$.
\end{restatable}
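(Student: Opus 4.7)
The plan is to build the chain $y_1 \prec y_2 \prec \ldots \prec y_d$ by induction on $i$. For the base case I set $y_1 = v_1^N$ for an integer $N$ chosen large enough (for instance $N = |\prog{u}{}|!$) that $v_1^N$ acts as an idempotent in the transition monoid of $\prog{u}{}$; this guarantees that $v_1^N$ is strongly-normalized in $\prog{u}{}$ and weakly-normalized from every state. Together with the $u$-invariance inherited from the reliability of $v_1$, this makes $(u,y_1)$ persistent. Since $uy_1^\omega = uv_1^\omega$ and $\aut{F}$ is saturated under duo-normalization acceptance, the acceptances of the two pairs agree, and $v_1 \prec y_1$ is immediate.

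For the inductive step, suppose $y_1 \prec \ldots \prec y_{i-1}$ has been built persistent with matching acceptance. I set $y_i = (y_{i-1}\, v_i^N)^N$. The chain condition $y_{i-1} \prec y_i$ is clear because $y_i$ begins with $y_{i-1}\, v_i^N$. Persistence follows by applying the idempotent argument to the $u$-invariant word $y_{i-1}v_i^N$: its $N$-th power is idempotent on $\prog{u}{}$, hence $y_i$ is both strongly- and weakly-normalized, and $u$-invariance is preserved under composition.

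The crux of the proof is the acceptance condition $uy_i^\omega \in \sema{\aut{F}} \iff uv_i^\omega \in \sema{\aut{F}}$. Using saturation together with \autoref{clm:pers-impl-rel}, \autoref{clm:stbl-rlvnt-norm} and \autoref{clm:inf-clr-acc}, this reduces to showing that $\fincolorof{u}{y_i}$ has the same parity as $c_i := \fincolorof{u}{v_i}$; I aim for the stronger equality $\fincolorof{u}{y_i} = c_i$. The upper bound $\fincolorof{u}{y_i} \leq c_i$ is obtained by applying \autoref{clm:fin-clr-diff-eqv} (less is more) with $u \sim_L uy_{i-1}$ (the inductive $u$-invariance of $y_{i-1}$) together with the stability of $v_i$, giving $\fincolorof{u}{y_i} \leq \fincolorof{u}{y_{i-1}v_i^N} \leq \fincolorof{uy_{i-1}}{v_i^N} = \fincolorof{u}{v_i^N} = c_i$.

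The main obstacle is the lower bound $\fincolorof{u}{y_i} \geq c_i$. I plan to establish it via \autoref{clm:colors-and-chains} by exhibiting a reliable chain of length $c_i+1$ (or $c_i$, depending on whether $\mincoloru{u} = 0$ or $1$) strictly after $y_i$. The candidate chain is obtained by ``shifting'' the tail $v_{i+1} \prec \ldots \prec v_d$ of the given chain: writing $v_j = v_i \cdot w_j$ for $j > i$, I take words of the form $y_i \cdot w_j \cdot \xi_j$, where each $\xi_j$ is a suffix supplied by \autoref{clm:stbl-rlvnt-prd-ext} ensuring that $y_i w_j \xi_j$ is $u$-reliable with color $c_j$. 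The delicate point is verifying that each such word really attains color $c_j$ and not something strictly smaller; this is expected to hinge on combining less is more with \autoref{clm:fin-clr-mono-ninc} (monotonic non-increasing), possibly via a secondary induction on the residual chain length $d-i$.
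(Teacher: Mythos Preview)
Your forward induction builds the chain in the opposite order from the paper. The paper applies the induction hypothesis to the \emph{tail} $v_2 \prec \cdots \prec v_d$, obtaining a persistent chain $y_2 \prec \cdots \prec y_d$ with $v_1 \prec v_2 \prec y_2$, and only then manufactures $y_1$. The point of this ordering is that the acceptance check becomes trivial: using \autoref{clm:pers-pump}, one takes $y'_1 = v_1^{\,l}$ persistent and, for $i\geq 2$, lets $y'_i$ be $y_i$ with its $v_1$-prefix replaced by $v_1^{\,l}$. The pumping lemma guarantees that $y'_i$ reaches \emph{the same state} of $\prog{u}{}$ as $y_i$, so $(u,y'_i)$ is persistent and accepted iff $(u,y_i)$ is --- no colour computation is needed beyond the base case. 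Your construction forfeits this shortcut and must re-establish the parity of $\fincolorof{u}{y_i}$ at every step.

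The lower-bound plan you sketch does not close that gap. You propose to certify $\fincolorof{u}{y_i}\geq c_i$ via \autoref{clm:colors-and-chains} by producing a reliable chain beyond $y_i$, and your candidate is the shifted tail $v_{i+1}\prec\cdots\prec v_d$. But that tail has length $d-i$, whereas \autoref{clm:colors-and-chains} requires length $c_i+1$ (or $c_i$). Nothing in the hypotheses forces $d-i\geq c_i$: colours along a reliable chain decrease strictly (\autoref{clm:mon-decr}) but can skip, so $c_i$ may be arbitrarily larger than $d-i$. To repair this you would have to build a chain of the correct length from extensions of $v_i$ supplied by \autoref{clm:fin-clr-stp}, and then argue that prefixing each extension by $P$ (where $y_i = P\cdot v_i$) preserves its colour exactly --- but that is another instance of the very lower bound you are trying to prove, so the ``secondary induction'' you allude to is not obviously well-founded. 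The paper's route via \autoref{clm:pers-pump} sidesteps all of this.
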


Hence, following \autoref{thm:fdfa-wagner-cor} the Wagner position of an FDFA can also be determined by the length of a maximal reliable chain. Following \autoref{thm:lb-diameter-meas} this is PSPACE-hard for duo-normalized FDFAs.
Note that using the same reasoning as in \autoref{prop:chain-elment-exp} we can deduce a bound on the size of a minimal reliable period for a certain color.

\begin{corollary}\label{cor:rel-chain-elment-exp}
     There exists a family of languages $\{L_{n}\}$ with a duo-normalized FDFA with number of states polynomial in $n$ where
     there are periods of a color $c$ and 
    the minimal length of a reliable period of color $c$ is exponential in $n$.
\end{corollary}

It follows from \autoref{clm:pers-impl-rel} that in normalized FDFAs a duo-normalized chain is a reliable chain.
Using this fact we show in \autoref{thm:wagner-on-normalized-fdfas} that for normalized FDFAs the position in the Wagner hierarchy can be computed in  NLOGSPACE. Therefore, in normalized FDFAs the size of a minimal reliable word witnessing a certain color cannot be exponential in the size of the FDFA.
This makes sense, since as we show in~\autoref{thm-duo-succ}, duo-normalized FDFAs can be exponentially more succinct than normalized FDFAs.

\paragraph*{The Colorful FDFA}
In \autoref{sec:succinctness} we provide some succinctness results regarding FDFAs with duo-normalized acceptance. 
In particular we consider a canonical FDFA, that we term the \emph{Colorful FDFA}, whose underlying right congruences ($\progequiv{u}{\colorful}$) differentiates between words $x$ and $y$ if there is an extension $z$ so that the respective extensions $xz$ and $yz$ disagree on the color (wrt $u$). This definition tightly relates to the \emph{precise family of weak priority mapping}~\cite[Def. 17]{BL23}.\footnote{
From the definition of $\progequiv{u}{\colorful}$ it is easy to see that all words in the same equivalence class have the same color. Thus, from $\progequiv{u}{\colorful}$ one can define a coloring function for every equivalence class of $\sim_L$. This coloring induces exactly the precise family of weak priority mapping wrt to $\sim_L$ from~\cite[Def. 17]{BL23}.}

\begin{definition}[The Colorful Equivalence Classes]\label{def:clr-equiv}
Let 
 $u,x,y\in\Sigma^*$.
We define $x\progequiv{u}{\colorful} y$ if for every $z\in\Sigma^*$  we have $\finColorof{u}{xz}=\finColorof{u}{yz}$.
\footnote{Note that colors for finite words are defined only for non-empty periods. We hence designate an equivalence class for $\epsilon$.} 
\end{definition}

The Colorful FDFA uses the automaton for $\sim_L$ for the leading automaton as the other canonical FDFAs do. For the progress DFA for $u$ it takes a DFA whose automaton structure is derived by the equivalence relation $\progequiv{u}{\colorful}$, where the accepting states are those with an even color. 
The acceptance type of the colorful FDFA is duo-normalization. 

\begin{definition}[The Colorful FDFA]
The \emph{colorful FDFA} for a language $L$, denoted $\fdfa{L}{\colorful}$ uses duo-normalized acceptance and consists of 
$(\autofrc{\sim_L},\{\prog{u}{\colorful}\})$ where
${\prog{u}{\colorful}}$ is a DFA with the automaton structure 
$\autofrc{\progequiv{u}{\colorful}}$ where state $q_v$ is accepting if $\finColorof{u}{v}$ is even.
\end{definition}

A generalization of the syntactic FDFA for a language $L$ and a right congruence $\sim$ refining $\sim_L$ is studied in \cite{BL23}, we term these \emph{$\sim$-syntactic FDFAs}.\footnote{\emph{Canonical FORC} is the term used in~\cite{BL23} for the underlying FORC.}
An FDFA with leading automaton $\aut{A}[\sim]$ is a \emph{$\sim$-syntactic FDFA} if the progress right congruences satisfy that $x\approx_u y$ iff (a) $ux\sim uy$ and (b) for every $z\in\Sigma^*$ it holds that $uxz \sim uyz$ implies $u(xz)^\omega \in L$ iff $u(yz)^\omega \in L$. It is shown in~\cite[Lemma 21]{BL23} that if $x$ is duo-normalized wrt $\sim$ and $\approx_u$ then for every $y \approx_u x$ we have that $y$ is also duo-normalized. We can thus refer to a state as being duo-normalized. It is then showed that two duo-normalized states in the same SCC of a $\sim$-syntactic progress DFA agree on acceptance~\cite[Lemma 23]{BL23}. These properties allows defining a polynomial procedure that associates the correct color with every state of a \emph{$\sim$-syntactic FDFA}~\cite{BL23}. 

It follows that on \emph{$\sim$-syntactic FDFAs} the Wagner position can be determined in polynomial time. The proof can be generalized to any FDFA using normalization in which the right congruence $x \approx_u y$ implies $x \sim_L y$. We call such FDFAs \emph{projective FDFAs}. \autoref{fig:class-incl} depicts the inclusions among these classes.
Since any FDFA using normalization can be transformed with a quadratic blowup into a \emph{projective FDFAs} (by multiplying the progress DFAs by the leading automaton) we have that the Wagner position on normalized FDFAs can be computed in polynomial time.

To relate this to our measure we note that considering FDFAs with normalized rather than duo-normalized acceptance, by \autoref{clm:pers-impl-rel} and \autoref{clm:reli-chn-impl-pres-chn}, it suffices to look for duo-normalized chains.
It is not hard to see that~\cite[Lemma 21]{BL23} entails that a duo-chain can be computed in non-deterministic logarithmic space, thus providing a non-deterministic logspace proof for computing the Wagner hierarchy on normalized FDFAs.

\begin{restatable}{proposition}{thmwagneronnormalizedfdfas}\label{thm:wagner-on-normalized-fdfas}
    The Wanger position of an FDFA using normalization can be computed in NLOGSPACE.
\end{restatable}

\subsection{Succinctness Results}\label{sec:succinctness}

It is shown in \cite{BL23} (\cite{BohnL22}) that the minimal DPA (DBA) can be exponentially more succinct than the precise family of weak priority mappings from which it follows that it can be exponentially more succinct than the colorful FDFA. As the following proposition shows, the lack of succinctness comes from the colorful FDFA requirements, not the duo-normalized acceptance condition.  Intuitively, the reason is that the colorful FDFA insists to give the correct color to every finite period, whereas to recognize the language it suffices to give the correct color only to infinite words, and being imprecise regarding certain finite periods, allows merging states corresponding to different equivalence classes of $\progequiv{}{\colorful}$ and thus can potentially lead to smaller progress DFAs. \autoref{thm:colorful-exp-smaller-syntactic} shows that this can lead to an exponential gain. 
We first demonstrate this with an example.
\begin{example}\label{example:a-and-bb-comparison}
Consider the language $\infty a \wedge \infty bb$. The colorful FDFA for it has a one state leading automaton and the progress DFA
    $\prog{}{\colorful}$ given in \autoref{fig:a-and-bb}.
    As a comparison, the progress DFA of the syntactic/recurrent/limit FDFA (which are the same for this example) is given by $\prog{}{\syntactic}$.
    Next to them there is the progress DFA $\prog{}{\minimlduo}$ of (a 
    one-state leading automaton)  duo-normalized FDFA that also recognizes this language.
    We can see that the syntatic FDFA has $8$ states, the colorful has $6$ states and $\aut{F}^\minimlduo$ has only $4$ states.
\end{example}

\begin{figure}[t]
\begin{center}
\scalebox{0.5}{
    \begin{tikzpicture}[->,>=stealth',shorten >=1pt,auto,node distance=2.0cm,semithick,initial text=,  initial above]
    
    \node[state,initial]    (ppp1)     {$\stateppp{1}$};
    \node[state]    (ppp2) [below of=ppp1]       {$\stateppp{2}$};
    \node[state]    (ppp3) [below of=ppp2]       {$\stateppp{3}$};
    \node[state,accepting]    (ppp4) [below of=ppp3]       {$\stateppp{4}$};
    \node[state]    (ppp5) [right of=ppp2]       {$\stateppp{5}$};
    \node[state]    (ppp6) [right of=ppp3]       {$\stateppp{6}$};
    \node[state]    (ppp7) [right of=ppp5]       {$\stateppp{7}$};
    \node[state,accepting]    (ppp8) [right of=ppp6]       {$\stateppp{8}$};
    \node[label]            (labelPe) [above of=ppp1, node distance=1.5cm] {$\prog{}{\syntactic}:$};

    \node[state,initial]    (pp1) [right of=ppp1, node distance=8.5cm]          {$\statepp{1}$};
    \node[state]    (pp2) [below of=pp1]       {$\statepp{2}$};
    \node[state]    (pp3) [below of=pp2]       {$\statepp{3}$};
    \node[state,accepting]    (pp4) [below of=pp3]       {$\statepp{4}$};
    \node[state]    (pp5) [right of=pp2]       {$\statepp{5}$};
    \node[state]    (pp6) [right of=pp3]       {$\statepp{6}$};
    \node[label]            (labelPe) [above of=pp1, node distance=1.5cm] {$\prog{}{\colorful}:$};

    \node[state,initial]    (p1) [right of=pp1, node distance=7cm]     {$\statep{1}$};
    \node[state]    (p2) [below of=p1]       {$\statep{2}$};
    \node[state]    (p3) [below of=p2]       {$\statep{3}$};
    \node[state,accepting]    (p4) [below of=p3]       {$\statep{4}$};
    \node[label]            (labelPe) [above of=p1, node distance=1.5cm] {$\prog{}{\minimlduo}:$};

    \path (ppp1) edge               node {$a$} (ppp2);
    \path (ppp2) edge [loop left]              node {$a$} (ppp2);
    \path (ppp2) edge [bend left]             node {$b$} (ppp3);
    \path (ppp3) edge [bend left]             node {$a$} (ppp2);
    \path (ppp3) edge             node {$b$} (ppp4);
    \path (ppp4) edge [loop left]    node {$a, b$} (ppp4);
    \path (ppp1) edge               node {$b$} (ppp5);
    \path (ppp5) edge               node {$b$} (ppp6);
    \path (ppp6) edge [loop left]              node {$b$} (ppp6);
    \path (ppp6) edge               node {$a$} (ppp4);
    \path (ppp5) edge               node {$a$} (ppp7);
    \path (ppp7) edge [loop above]              node {$a$} (ppp7);
    \path (ppp7) edge [bend left]             node {$b$} (ppp8);
    \path (ppp8) edge [bend left]             node {$a$} (ppp7);
    \path (ppp8) edge             node {$b$} (ppp4);

    \path (pp1) edge               node {$a$} (pp2);
    \path (pp2) edge [loop left]              node {$a$} (pp2);
    \path (pp2) edge [bend left]             node {$b$} (pp3);
    \path (pp3) edge [bend left]             node {$a$} (pp2);
    \path (pp3) edge             node {$b$} (pp4);
    \path (pp4) edge [loop left]    node {$a, b$} (pp4);
    \path (pp1) edge               node {$b$} (pp5);
    \path (pp5) edge               node {$a$} (pp2);
    \path (pp5) edge               node {$b$} (pp6);
    \path (pp6) edge [loop left]              node {$b$} (pp6);
    \path (pp6) edge               node {$a$} (pp4);
 
    \path (p1) edge [loop right]    node {$b$} (p1);
    \path (p1) edge               node {$a$} (p2);
    \path (p2) edge [loop left]              node {$a$} (p2);
    \path (p2) edge [bend left]             node {$b$} (p3);
    \path (p3) edge [bend left]             node {$a$} (p2);
    \path (p3) edge             node {$b$} (p4);
    \path (p4) edge [loop left]    node {$a, b$} (p4);
   \end{tikzpicture}}
\end{center}	
\caption{Three progress DFAs $\prog{}{\minimlduo}$, $\prog{}{\colorful}$ and $\prog{}{\syntactic}$.}\label{fig:a-and-bb}
\end{figure}

\begin{restatable}{theorem}{thmcolorfulexpsmallersyntactic}\label{thm:colorful-exp-smaller-syntactic}
    The colorful FDFA can be exponentially more succinct than the syntactic FDFA.
\end{restatable}
The proof uses the family of languages $\{L_n\}_{n\in\mathbb{N}}$ over $\Sigma=\{a, b, \langle, \rangle\}$ where $L_n$ accepts all words with infinitely many occurrences of $\langle a^k b^m \rangle$ for some $1 \leq k \leq n$ and $m$ that is divisible by the $k$-th prime. The idea is that the colorful FDFA, since it uses duo normalization, can look only for prefixes of the form $\langle a^k b^m \rangle$, whereas the syntactic FDFA, since it uses normalization should answer correctly also for prefixes of the form  $b^m\rangle \langle a^k$ and recognizing such prefixes is much harder.

\begin{figure} 
\begin{center}
\scalebox{0.5}{
    \begin{tikzpicture}[->,>=stealth',shorten >=1pt,auto,node distance=2.6cm,semithick,initial text=,  initial above,
    g/.style={minimum size=1cm,fg=#1},
    fg/.code={\pgfmathtruncatemacro{\iFill}{100*#1}
        \tikzset{fill=black!\iFill}}]

    \coordinate (O) at (23,3);
    \coordinate (B) at (23,0);
    \coordinate (C) at (10,2.3);    
    \coordinate (CC) at (10,1.75);    
    \coordinate (CCC) at (10,1.4);    
    \coordinate (CCCC) at (10,1.2);    
    \coordinate (CCCCC) at (10,0.6);    
    \coordinate (P) at (13,1.35);    
    \coordinate (Pp) at (10.7,1.9);    
    \coordinate (Ps) at (11.7,1.7);    
    \coordinate (Pr) at (13.7,0.9);    
    \coordinate (Pl) at (13.7,1.7);    
    \coordinate (Pc) at (14.9,1.05);    
    \coordinate (Prp) at (13.2,1);    
    \coordinate (CCCCCd) at (7,-2);
    \coordinate (CCCCCn) at (7.45,-0.45);

    \draw (CCCCC)[line width=1mm, g=0.2 ] ellipse (6cm and 3cm);
    \draw (CCCC)[line width=1mm, g=0.1]  ellipse (4.2cm and 2.1cm);    
    \draw (C)[line width=1mm,g=0]  ellipse (1.4cm and .75cm);    
    \draw (CC)  ellipse (2cm and 1cm);
    \draw (CCC) ellipse (3cm and 1.5cm);
    \draw[rotate=350,dashed,blue] (P) ellipse (2.6cm and 1cm);
    \node[label,rotate=355] (proper)  [below of=Prp, node distance=5mm] {\figcls{\textcolor{blue}{Proper}}};     
    
    \node[label] (p)  [below of=Pp, node distance=0mm] {\figtxt{p}}; 
    \node[label] (s)  [below of=Ps, node distance=0mm] {\figtxt{s}}; 
    \node[label] (r)  [below of=Pr, node distance=0mm] {\figtxt{r}};
    \node[label] (l)  [below of=Pl, node distance=0mm] {\figtxt{l}};    
    \node[label] (c)  [below of=Pc, node distance=0mm] {\figtxt{c}};

    \path (CCCCCd) edge [double]  node {Thm.~\ref{thm-duo-succ}}  (CCCCCn);    

    \node[label] (exact)  [below of=C, node distance=.1mm] {\figcls{\Large\textbf{Exact}}}; 
    \node[label] (synct)  [below of=C, node distance=12mm] {\figcls{$\sim$-syntactic}};       
    \node[label] (proj)  [below of=C, node distance=20mm] {\figcls{Projective}};       
    \node[label] (norm)  [below of=C, node distance=27mm] {\figcls{\Large\textbf{Normalized}}};     
    \node[label] (norm)  [below of=C, node distance=40mm] {\figcls{\Large\textbf{Duo-normalized}}};       

    \node[label] (periodic)  [above of=O, node distance=2.2cm]                    {\figtext{periodic}};
    \node[label] (syntactic) [below of=periodic] {\figtext{syntactic}};
    \node[label] (recurrent) [below left of=syntactic] {$~$};
    \node[label] (recurrent-text) [ left of=recurrent, node distance=5mm] {\figtext{recur.}};
    \node[label] (limit)     [below right of=syntactic]       {$~$};
    \node[label] (limit-text)     [right of=limit, node distance=3mm]       {\figtext{limit}};
    \node[label] (colorful)  [below of=B]       {\figtext{colorful}};
    \node[label] (minimalduo) [below of=colorful] {\figtext{proper-duo}};

    \path (minimalduo) edge [->, double]   node [left] {Thm.~\ref{thm:minduo-exp-smaller-colorful}} (colorful);
    \path (colorful) edge [double] node  {Thm.~\ref{thm:colorful-exp-smaller-syntactic}} (recurrent);
    \path (colorful) edge [double] node [xshift=28pt,yshift=-10pt,{fill=white}] {\begin{tabular}{c}Thm.~\ref{thm:colorful-exp-smaller-syntactic}\end{tabular}} (syntactic);
    \path (colorful) edge [double] node [right] {Thm.~\ref{thm:colorful-exp-smaller-syntactic}} (limit);

    \path (recurrent) edge [<->,]                node [near end,yshift=-7pt,{fill=white}] {\cite{LST23}} (limit);
    \path (recurrent) edge   node [above, left] {\cite{AngluinF16}} (syntactic);	
    \path (limit) edge                 node [above, right] {\cite{LST23}} (syntactic);
    \path (syntactic) edge [double]  node {\cite{AngluinF16}} (periodic);

   \end{tikzpicture}}
   \caption{Left: The inclusions among these classes of FDFAs (in black), as well as the placement of the canonical FDFAs in these classes (in blue). The letters \textsc{p,s,r,l,c} abbreviate \textsc{periodic}, \textsc{syntactic},
   \textsc{recurrent},
   \textsc{limit}, and
   \textsc{colorful}, resp.
Right: Picture summarizing succinctness results on proper FDFAs. A double-line (resp. one-line) arrow form $\figtext{c}$ to $\figtext{d}$ indicates that 
$\figtext{c}$ can be exponentially (resp. quadratically) more succinct than $\figtext{d}$. 
}\label{fig:succ-results}\label{fig:class-incl}
\end{center}
\end{figure}

\begin{restatable}{theorem}{thmminduoexpsmallercolorful}\label{thm:minduo-exp-smaller-colorful}
 Duo-normalized FDFAs can be exponentially more succinct than the colorful FDFA.
\end{restatable}
The essence of the proof is that for languages whose requirement is that a set of infixes occur infinitely often,
a duo-normalized FDFA, similar to a DBA, can look for the infixes in an  arbitrary fixed order, whereas
the colorful FDFA must record at each time whether it has seen all infixes or not.  This is since until it hasn't seen all infixes the color is $1$ and once it has seen them all, the color is $0$. The proof thus uses the same family of languages as in~\cite[Prop. 14]{BohnL22}.

We note that all the canonical models (the periodic, syntactic, recurrent, limit and colorful) use $\aut{A}[\sim_L]$ for the leading automaton. 
An FDFA in general can use any leading automaton $\aut{A}[\sim]$ for a right congruence $\sim$ that refines $\sim_L$.
We term FDFAs in which the leading automaton is $\aut{A}[\sim_L]$ \emph{proper}. 
The FDFA used in the proof of \autoref{thm:minduo-exp-smaller-colorful} is proper. We can thus strengthen the claim as follows.
\begin{corollary}\label{cor:minduo-exp-smaller-colorful}
 Proper duo-normalized FDFAs can be exponentially more succinct than the colorful FDFA.
\end{corollary}

Surprisingly, Klarlund has shown that non-proper normalized FDFAs may be exponentially more succinct than proper normalized FDFAs~\cite{Klarlund94}.
One may thus wonder if non-proper normalized FDFAs can be as succinct as duo-FDFAs. That is, if duo-normalization adds succinctness when considering non-proper FDFAs. The following theorem shows that duo-normalization does adds succinctness even relative to non-proper FDFAs (and even when limiting duo-normalized FDFAs to proper ones).

\begin{restatable}{theorem}{thmduosucc}\label{thm-duo-succ}
(Proper) duo-normalized FDFAs can be exponentially more succinct than (not necessarily proper) normalized FDFAs. 
\end{restatable}
The proof uses the following family of languages
  over  $\Sigma = \Sigma_a \cup \Sigma_s $ where
     $\Sigma_a=\{a_1,\ldots,a_n\}$ and $\Sigma_s=\{s_1,\ldots,s_n\}$.
    $$L_n = \left\{ w\in(\Sigma^*\Sigma_a)^\omega \left|
    \begin{array}{cc}
        \text{Let }
        m = \max \{j ~|~  a_j \in \Sigma_a \cap \inf(w) \}.  \\
        \text{Then }  s_m\in\Sigma_s \text{ appears inf. often in } w.
    \end{array}\right.\right\}.$$
  
    The challenge in the language can be observed in periods where $s_m$ occurs before $a_m$ was seen (for $m$ being the maximal index of an $a_i$ letter in the period). The duo-normalized FDFA has the privilege of looking for duo-normalized decomposition in which
    $s_m$ is observed after the maximal $a_m$ is seen, whereas it is more complicated for a normalized FDFA (see appendix).

\section{Discussion}\label{sec:conclusions}
We provided a measure on FDFAs that corresponds to the Wagner hierarchy, we showed that it corresponds to the semantic notion of a reliable chain, and that a reliable chain in normalized or duo-normalized FDFAs implies the syntactic notion of a persistent chain. For the other direction, going from normalized FDFAs a duo-normalized chain implies a reliable chain whereas going from duo-normalized FDFAs one needs to look for persistent chains, and answering whether a duo-normalized FDFA has a persistent chain of length $k$ is PSPACE-complete.

With regard to succinctness, we have shown that FDFA with duo-normalized acceptance can be exponentially more succinct than FDFAs using (standard) normalization.
At the same time the common operations procedures and decision problems on them can still be done in NLOGSPACE. It is thus interesting to consider them as an acceptor for $\omega$-regular languages in verification (model-checking). \autoref{fig:succ-results} (right side)  summarizes the results regarding succinctness among the canonical FDFAs suggested thus far. It shows that the colorful FDFA can be exponentially more succinct than all other canonical models. At the same time, a minimal duo-normalized FDFA can be exponentially more succinct than the colorful FDFA. 

The figure might raise the question whether some duo-normalized FDFA can be doubly-exponentially more succinct than the periodic FDFA (the least succinct canonical representation). However this cannot be since a duo-normalized FDFA can be translated into an non-deterministic B\"uchi automaton (NBA) using exactly the same procedure as the one transforming a normalized FDFA into an NBA~\cite{AngluinBF18}. The reason is that the construction actually looks for a duo-normalized decomposition (which by saturation exists). 

\begin{proposition}
    If $L$ has a 
    duo-normalized FDFA $\aut{F}$ then it has an NBA of size polynomial in the number of states of $\aut{F}$.
\end{proposition}
Since an NBA can be converted into a periodic FDFA in an exponential blowup~\cite{CalbrixNP93old,KuperbergPP21} we get an overall exponential translation from duo-normalized FDFAs to the periodic FDFA, showing no doubly-exponential lower bound can be achieved. 
Since NBAs can be converted to DPAs with an exponential blow up~\cite{Safra88,Piterman06,Schewe09,FismanL15} and DPAs can be polynomially converted into non-proper FDFAs~\cite{AngluinBF18} we can conclude the following.
\begin{corollary}
The periodic FDFA, (non-proper) normalized FDFAs and DPAs of a language $L$ are at most exponentially larger than a duo-normalized FDFA for $L$.
\end{corollary}

\quad\\   

\textbf{Ackowledgments}
We thank Le\'on Bohn and Christof L\"oding for an interesting conversation on the subject.

\bibliographystyle{plainurl}
\bibliography{bib.bib}

\appendix

\section{Omitted Proofs}
\clmduoexists*

\begin{claim}\label{clm:perst-exists}
For every $x \in\Sigma^*$ and $y \in\Sigma^+$ the word $xy^\omega$ has a persistent decomposition of the form $(xy^i,y^j)$ where $i$ and $j$ are of size quadratic in $\aut{F}$.
\end{claim}

We prove \autoref{clm:duo-exists} and \autoref{clm:perst-exists} together.

\begin{proof}
 Let $\aut{F}=(\aut{Q},\{\aut{P}\})$ be an FDFA.  Let $m=|\aut{Q}|$ and $n=|\prog{u}{}|$. 
 Let $u\in\Sigma^*$ and $v\in\Sigma^*$.
 For exact normalization the claim clearly holds for $i=0$ and $j=1$.

 For (standard) normalization, we require that $(u,v)$ is strongly normalized wrt $\aut{Q}$.
 Reading $v^\omega$ from any state of $\prog{u}{}$ we must reach a loop of at most $m$ states. Reading $v^m$ we can surely reach such a state. Thus there exists $0 \leq i\leq m$ such that $v^i$ reaches the loop.
 There are at most $m$ states in the loop, thus there exists a $1\leq j\leq m$ such that $v^j$ closes a loop on $\autstate{Q}{v^i}$.

 For duo-normalization, let $i$ and $j$ be such that $(uv^i,v^j)$ is normalized. Additionally, we require the period to be strongly-normalized wrt $\prog{u}{}$. From similar reasoning,  
 there exists $1 \leq k\leq 2n$ such that $v^k$ reaches and closes the final loop on $v$ in $\prog{u}{}$. 
 Let $l = \lcm(j,k)$ then $l\leq 2mn$ and $(uv^i,v^l)$ is duo-normalized.~\footnote{Note that a closer analysis may produce smaller $l$'s, such as taking the smallest multiple of $\lcm(j,k)$ that is greater than $n$, for $k$ that only closes the loop in $\prog{u}{}$.}

 For persistent normalization, we additionally require that, from every state in $\prog{u}{}$, after reading $v$ we reach the final loop on $v$. Again, since it takes at most $n$ states to reach the final loop (from any state) the same analysis as for duo-normalization works here.
\end{proof}

\clmduonormmainainsallgood*
\begin{proof}
Let $\aut{F} = (\aut{Q}, \aut{P})$ and $\aut{F}' = (\aut{Q}', \aut{P}')$ be saturated FDFAs using duo-normalization.

\emph{Complementation.} Let $\aut{F}^c = (\aut{Q}, \aut{P}^c)$ where $\aut{P}^c$ is the DFA obtained from $\aut{P}$ by flipping acceptance of states. Clearly, $(u,v)$ is duo-normalized in $\aut{F}$ iff it is duo-normalized in $\aut{F}^c$, as duo-normalization does not take into account accepting states. Therefore, since $\aut{F}^c$ will always reply to the contrary of $\aut{F}$, we can determine that $\sema{\aut{F}^c} = \overline{L}$.

\emph{Intersection.} Let $\aut{F} \otimes \aut{F}' = (\aut{Q} \times \aut{Q}', \{\aut{P} \otimes \aut{P}'\})$ be the FDFA obtained by looking at the cartesian multiplication of the leading and progress automata, and denoting a state $(q, q')$ accepting iff both $q$ and $q'$ are accepting in their respective progress DFAs. If a word $(u, v)$ is duo-normalized with respect to $\aut{F} \otimes \aut{F}'$, then $Q \times Q'(u) = Q \times Q' (u \cdot v)$ and $\prog{\aut{Q} \times \aut{Q}'(u)}{}(v) = \prog{\aut{Q} \times \aut{Q}'(u)}{}(v^2)$, so therefore $\aut{Q}(u) = \aut{Q}(u \cdot v), \aut{Q}'(u) = \aut{Q}'(u \cdot v)$ and $\prog{\aut{Q}(u)}{}(v) = \prog{\aut{Q}(u)}{}(v^2), \prog{\aut{Q'}(u)}{}(v) = \prog{\aut{Q'}(u)}{}(v^2)$, and therefore $(u,v)$ is duo-normalized with respect to $\aut{F}$ and $\aut{F}'$. Hence, since $\aut{F}$ and $\aut{F}'$ are saturated, and a duo-normalized word with respect to $\aut{F} \otimes \aut{F}'$ is also duo-normalized with respect to them, we deduce that $\sema{\aut{F} \otimes \aut{F}'} = \sema{\aut{F}} \cap \sema{\aut{F}'}$. Note that $\aut{F} \otimes \aut{F}'$ can be generated by iterating over the states of $\aut{F}$ and $\aut{F}'$, which requires space logarithmic in their sizes.

\emph{Union.} Let $\aut{F} \oplus \aut{F}' = (\aut{Q} \times \aut{Q}', \{\aut{P} \oplus \aut{P}'\})$ be the FDFA obtained by looking at the cartesian multiplication of the leading and progress automata, and denoting a state $(q, q')$ accepting iff either $q$ or $q'$ are accepting in their respective progress DFAs. By identical reasoning to the intersection case, we deduce that $\sema{\aut{F} \oplus \aut{F}'} = \sema{\aut{F}} \cup \sema{\aut{F}'}$.

\emph{Membership.} 
Given a pair $(u,v)$, by \autoref{clm:duo-exists} there exists $i,j$ quadratic in the size of $\aut{F}$ such that $(uv^i,v^j)$ is duo-normalized. The question of membership can be answered by running $(uv^i,v^j)$ in $\aut{F}$. Memory-wise, because of the limitation on the size of $i$ and $j$, this can be computed in logarithmic space.

\emph{Emptiness.} 
We are asking whether there exists a duo-normalized pair $(x,y)$ accepted by $\aut{F}$. We can guess $x$ (one letter at a time, at most $|\aut{Q}|$ letters) and traverse the leading automaton $\aut{Q}$ to reach some state $q$. We then guess an accepting state $q'$ in the progress automaton $\prog{q}{}$ and a word $y$ (one letter at the time). While guessing $y$ we traverse $\autfromq{Q}{q}$, $\prog{q}{}$ and $\aut{P}_{q_{[q'}}$,
if the word is duo-normalized we should reach states $q$, $q'$ and $q'$ again, respectively. The length of $y$ need not be above $|\aut{Q}|\cdot|\prog{q}{}|^2$.   

\emph{Universality.} Trivially reducible to complementation and emptiness.

\emph{Containment and Equality. } $\sema{\aut{F}} \subseteq \sema{\aut{F}'}$ iff $\sema{\aut{F}} \cap \sema{{\aut{F}'}^c} = \emptyset$, therefore it is reducible to intersection, complementation and emptiness. Equality is trivially reducible to containment.
\end{proof}

\clmpersaccunnecessary*
\begin{proof}
    Let $\aut{F}=(\aut{Q},\{\aut{P}\})$ be an FDFA. 
    Let $\aut{F}_p$ be $\aut{F}$ using persistent-acceptance and $\aut{F}_d$ be $\aut{F}$ using duo-normalized acceptance. Assume towards contradiction that there exists some $(x,y)\in\sema{\aut{F}_d}$ for which $(x,y)\not\in\sema{\aut{F}_p}$ or vice versa.  
    Since persistent normalization implies duo-normalization, if
    $(x,y)$ is persistent-normalized then
    $(x,y)\in\sema{\aut{F}_p}$ iff $(x,y)\in\sema{\aut{F}_d}$.
    Assume $(x,y)$ is duo-normalized but not persistent-normalized.     
    There exists some $j \geq 1$ for which $(x,y' = y^j)$ is persistent.
    Assume $\aut{P}_x(y)=q$.
    Since $(x,y)$ is duo-normalized, $y$ loops on $q$. Thus, $\aut{P}_x(y'){=}q$. Hence $(x,y)\in\sema{\aut{F}_d}$ iff $q$ is accepting iff $(x,y')\in\sema{\aut{F}}_p$. 
\end{proof}

\clmperspump*
\begin{proof}
    Let $q_1,q_2,\ldots,q_n$ be the states of $\aut{A}$. 
    Let $q'_1,q'_2,\ldots,q'_n$ be the states reached from $q_1,q_2,\ldots,q_n$ after reading $v$, respectively. 
    As $v$ is persistent-normalized, for every state $q'_i$ reading enough repetitions of $v$ loops back to $q'_i$. 
    Let $k_1,k_2,\ldots,k_n$ be the number of repetitions of $v$ required to loop on $q'_1,q'_2,\ldots,q'_n$ respectively. Let $k$ be the multiplication of all the $k_i$'s. It follows that $v^k$ closes a loop on all the $q'_i$'s. 
    Thus reading $v^k$, or $v^{ki}$ for every $i \geq 1$ and from every state, after reading $v$, does not change the reached state. Hence, if $vz$ is persistent (duo-normalized) then $v^{1+ki}z$ is persistent (duo-normalized) as well.
\end{proof}

\thmcompdiametermeas*
\begin{proof}
    Let $\aut{F}=(\aut{Q},\{\aut{P}\})$ be an FDFA. Let $u\in\Sigma^*$. 
    With every progress DFA $\prog{u}{}$ one can associate a DFA $\perstaut{u}$, which we term the \emph{persistent DFA}, that satisfies the following. Every state can be classified into \emph{significant} and \emph{insignificant}, such that 
    a word is persistent iff it reaches a significant state. The states of $\perstaut{u}$ are vectors of size $|\prog{u}{}|+1$ where the first entry ranges between $1$ and $|\aut{Q}|$ and the rest of the entries range between $1$ and $|\prog{u}{}|$. 
    Let $n=|\prog{u}{}|$.
    The initial state is $(\autstate{Q}{u},1,2,\ldots,n)$.
    There is an edge from state $(j,i_1,i_2,\ldots,i_n)$ to $(j',i'_1,i'_2,\ldots,i'_n)$ on $\sigma$
    if $\autfromq{\aut{Q}}{j}(\sigma)=j'$ and
    $\autfromq{\prog{u}{}}{i_k}(\sigma)=i'_k$ for every $1\leq k \leq n$.
    
    State $(j,i_1,i_2,\ldots,i_n)$ is accepting if $i_1$ is accepting in $\prog{u}{}$. 
    State $(j,i_1,i_2,\ldots,i_n)$ is \emph{significant} if the following conditions hold (i) $j=\autstate{Q}{u}$ (ii) $i_{i_1}=i_1$ and (iii) the indices $i_1,i_2,\ldots,i_n$ are a permutation of $1$ to $n$. Let $v$ be a word reaching a significant state. Observe that (i) entails that $v$ is $u$-invariant, (ii) entails that $v$ is strong-normalized wrt $\prog{u}{}$, and (iii) entails that 
    $v$ is weak-normalized wrt $\autfromq{\prog{u}{}}{q}$ for every $q$ in $\prog{u}{}$.~\footnote{This is since, the fact that indices form a permutation means we can detect \emph{cycles} in the vector, in the following sense. Let $K=\{k_1,\ldots,k_h\}$ be a subset of $\{1,\ldots,n\}$ we say that $K$ is a \emph{cycle} in $(j,i_1,i_2,\ldots,i_n)$ if $i_{k_m}=k_{m+1}$ for every $1\leq m < h$ and $i_{k_h}=i_{k_1}$. Such a cycle of length $h$ essentially guarantees that reading $v^h$ from every state in the cycle loops back to it. Thus after reading $v$ once we reach its terminal cycle as required.} 
    The three requirements implies that every word $v$ that reaches a significant state is persistent, and the same reasoning guarantees that if $v$ is persistent it reaches a significant state.

    Note that if $v$ reaches a significant-accepting state then $uv^\omega\in\sema{\aut{F}}$, and if $v$ reaches a significant-rejecting state then $uv^\omega\notin\sema{\aut{F}}$.
    From this and the previous claim we conclude that a path in $\perstaut{u}$ that has $k-1$ alternations between significant accepting and rejecting states corresponds to a persistent chain of length $k$, and vice versa.

    We provide a non-deterministic polynomial space algorithm to answer whether the positive (or negative) diameter measure of $\aut{F}$ is at least $k$. By Savitch's theorem there is also a deterministic polynomial space algorithm.
    The algorithm guesses words $u\in\Sigma^*$ and $v\in\Sigma^+$ letter by letter. The size of $u$ is bounded by $m$ and of $v$ by $m\cdot n^n$ where $m=|\aut{Q}|$ and $n=\max\{|\aut{P}|\}$. It simulates the run of $\perstaut{u}$ on $v$ recording the vector representing the state after reading the last letter, and counting the number of alternations between significant and insignificant states. (Note that the check if a state is  significant can be done in polynomial space.) 
\end{proof}

\propchainelmentexp*
\begin{proof}
    Let $p_1,p_2,\ldots,p_k$ be all the primes of size at most $n$.
    The proofs uses the construction of the FDFA in the proof of \autoref{thm:lb-diameter-meas} where the alphabet $\Sigma=\{a\}$
    and $D_i$ is a DFA accepting $(a^{p_i})^+$.
    Thus from the same arguments, the FDFA has a persistent chain of length $k+1$ and if $w_1 \prec w_2 \prec \cdots \prec w_{k+1}$ is a persistent chain then
    $w_{k+1}$ has a prefix of the form $x\sharp y$ where $y\in a^+$ and $y$ is in all $D_i$'s.
    From the nature of $D_i$, $y$ must be a multiplication of $\Pi_{1\leq i\leq m}p_i$.
    Since the number of primes up to $n$ is bounded by $\Theta(n/\log n)$ and since each prime if of size $2$ at least.
    The multiplication is of size at least $2^{\Theta(n/\log n)}$ while the number of states in the FDFA is quadratic in $n$.
\end{proof}

Below we state some useful properties of colors for finite words, that appear or can be distilled from the claims in~\cite{BL23}.
Since the formulation here defers from that in~\cite{BL23}, for completeness, we provide proofs using our notations.

\begin{proposition}\label{prop:finite-colors-properties}
Let $u,u',x\in\Sigma^*$, $v,y\in\Sigma^+$ and $w\in\Sigma^\omega$.
\begin{enumerate}[nosep]
    \item \label{clm:fin-clr-mono-ninc} 
    Monotonically non-increasing:
    If $\fincolorof{u}{v} = c$ then \emph{for all} $z\in\Sigma^*$ it holds that $\fincolorof{u}{vz} \leq c$.
    \item \label{clm:fin-clr-stp}
    Step by step:
    If $\fincolorof{u}{v} = c > 1$ then \emph{exists} $z\in\Sigma^*$ such that $vz$ is $u$-invariant and $\fincolorof{u}{vz} = c-1$.\footnote{Note that $1$ need not be the minimal non-negative color. We can later strengthen this and show (using \autoref{clm:fin-clr-diff-eqv}, see \autoref{clm:use-of-less-is-more-in-step-by-step}), that the claim holds for every $c> \mincoloru{u}$.}
    \item \label{clm:stbl-prd-evnt} 
    Every period eventually stabilizes:
    There {exists} $i > 0$ such that $v^i$ is stable wrt $u$. If $v$ is $u$-invariant then $v^i$ is also $u$-reliable.
    
    \item \label{clm:stbl-rlvnt-prd-acc}
    Reliable periods determine:
    If $v$ is $u$-reliable then $\fincolorof{u}{v}$ is even iff $uv^\omega\in L$.

    \item \label{clm:mon-decr} 
    Monotonically decreasing:
    Assume $v \prec vz$ are both $u$-reliable and $u(v)^\omega\in L$ iff $u(vz)^\omega\notin L$. Then $\fincolorof{u}{v} > \fincolorof{u}{vz}$.

    \item \label{clm:stbl-rlvnt-prd-ext} 
    Extending to reliable periods of the same color:
    If $v$ is $u$-relevant 
then \emph{exists} $z\in\Sigma^*$ s.t. $\fincolorof{u}{vz} 
=$ 
{$\fincolorof{u}{v}$} and $vz$ is $u$-reliable.

    \item \label{clm:fin-clr-diff-eqv}
    Less is more:
    If $u\sim_L xy$ then $\fincolorof{u}{v}\geq \fincolorof{x}{yv}$.
    
\end{enumerate}    
\end{proposition}

Below we use 
 $\mincolorof{L}$ for $\min\{ c\geq 0 ~\colon~\fincolorof{u}{v}=c,\ u\in\Sigma^*,v\in\Sigma^+\}$
 and
 $\mincoloru{u}$ to be $\min\{ c\geq 0 ~\colon~\fincolorof{u}{v}=c,\ v\in\Sigma^+\}$ if this set is non-empty and 
  $\mincolorof{L}$ otherwise.

\begin{proof}
\begin{enumerate}[nosep]
    
    \item  
    Monotonically non-increasing:
Assume towards contradiction that exists $z\in\Sigma^*$ such that $\fincolorof{u}{vz} > c$. If $c = \neginfty$ then it follows that exists $x\in\Sigma^*$ such that $vzx$ is $u$-invariant, contradicting $v$ being irrelevant.
Otherwise $c \geq 0$ and it follows that exists $x\in\Sigma^*$ such that $vzx$ is $u$-invariant for which $u(vzx)^\omega\not\in L$ iff $c$ is even, and for all $i > 0$ it holds that $\fincolorof{u}{(vzx)^i} \geq c$. It follows that $zx$ contradicts $\fincolorof{u}{v} = c$.

    \item 
    Step by step: 
If $\fincolorof{u}{v} = c > 1$ then from $\fincolorof{u}{v} > c-2 \geq 0$ it follows that exists $z\in\Sigma^*$ such that $vz$ is $u$-invariant for which $u(vz)^\omega\not\in L$ iff $c-2$ is even, and for all $i > 0$ it holds that $\fincolorof{u}{(vz)^i} \geq c-2$. From $\fincolorof{u}{v} = c$ it follows that $u(vz)^\omega\in L$ iff $c$ is even, or exists $i > 0$ such that $\fincolorof{u}{(vz)^i} < c$. Note that $c$ is even iff $c-2$ is even, thus there must exist $i > 0$ such that $\fincolorof{u}{(vz)^i} < c$. Because $\fincolorof{u}{(vz)^i} \geq c-2$ it follows that $\fincolorof{u}{(vz)^i} = c-1$. 
Hence $z' = z(vz)^{i-1}$ satisfies the claim.

    \item 
    Every period eventually stabilizes:
Assume towards contradiction that for all $i > 0$ it holds that $v^i$ is unstable wrt $u$. Then for every $i > 0$ exists $j > 0$ such that $\fincolorof{u}{v^i} \neq \fincolorof{u}{(v^i)^j}$ and from \autoref{clm:fin-clr-mono-ninc} it follows $\fincolorof{u}{v^i} > \fincolorof{u}{(v^i)^j}$. As the same is true for $i\times j$, exists $k > 0$ such that $\fincolorof{u}{v^{ij}} > \fincolorof{u}{(v^{ij})^k}$. By repeating the process we can create an infinitely decreasing chain of colors, contradicting the finite number of colors. If $v$ is $u$-invariant then so is $v^i$ and thus it is reliable.

    \item
    Reliable periods determine:
Wlog assume towards contradiction that $\fincolorof{u}{v} = c$ is even and $uv^\omega\not\in L$. Let $z=\epsilon$ then because $v$ is $u$-reliable it follows that $v = vz$ is $u$-invariant. From $\fincolorof{u}{v} = c$ and $c$ is even, it follows that either $u(vz)^\omega = uv^\omega\in L$ or exists $i > 0$ such that $\fincolorof{u}{(vz)^i} = \fincolorof{u}{v^i} < c$. From $uv^\omega\not\in L$, it follows that exists $i > 0$ such that $\fincolorof{u}{v^i} < c$. Because $v$ is $u$-reliable it holds that $\fincolorof{u}{v} = \fincolorof{u}{v^i} < c$ in contradiction.

    \item 
    Monotonically decreasing: 
    It follows from \autoref{clm:fin-clr-mono-ninc} that $\fincolorof{u}{v} \geq \fincolorof{u}{vz}$.
    It follows from \autoref{clm:stbl-rlvnt-prd-acc} that one color is even and the other one is odd. 
    Therefore $\fincolorof{u}{v} > \fincolorof{u}{vz}$

    \item 
    Extending to reliable periods of the same color:
Since $v$ is $u$-relevant we have $\fincolorof{u}{v}\geq 0$.
If $\fincolorof{u}{v} = 0$ then exists $z\in\Sigma^*$ such that $vz$ is $u$-invariant. From \autoref{clm:stbl-prd-evnt} exists $i > 0$ such that $(vz)^i$ is $u$-reliable. As $(vz)^i$ is relevant, from \autoref{clm:fin-clr-mono-ninc} we get $0 = \fincolorof{u}{v} \geq \fincolorof{u}{(vz)^i} > \neginfty$ as needed. Otherwise, $\fincolorof{u}{v} = c > c - 1 \geq 0$. 
It follows that exists $z\in\Sigma^*$ such that $vz$ is $u$-invariant for which $u(vz)^\omega\not\in L$ iff $c-1$ is even, and for all $i > 0$ it holds that $\fincolorof{u}{(vz)^i} \geq c-1$. 
As above, from \autoref{clm:stbl-prd-evnt} exists $j > 0$ such that $(vz)^j$ is stable and thus $u$-reliable. From \autoref{clm:fin-clr-mono-ninc}, $c = \fincolorof{u}{v} \geq \fincolorof{u}{(vz)^j} \geq c-1$. From \autoref{clm:stbl-rlvnt-prd-acc}, $\fincolorof{u}{(vz)^j} \neq c-1$ thus $\fincolorof{u}{(vz)^j} = c$ as needed.

    \item
    Less is more: 
    We prove the proposition by induction on the color $\fincolorof{u}{v}$.
We prove by induction on $c$, that $\fincolorof{u}{v}=c$ implies $\fincolorof{x}{yv}\leq c$, for all $v\in\Sigma^+$.

\begin{inparaitem}

    \item 
    In the base case $\fincolorof{u}{v} = \neginfty$. Assume towards contradiction that exists $z\in\Sigma^*$ such that $yvz$ is $x$-invariant. Then $x\sim_L xyvz$ thus $xy\sim_L xyvzy$ and from $xy\sim_L u$ we get $u\sim_L uvzy$, implying $vzy$ is  $u$-invariant, contradicting $v$ being irrelevant. Thus no such $z$ exists and $\fincolorof{x}{yv} = \neginfty \leq \fincolorof{u}{v}$.
    
    \item
    In the induction step $\fincolorof{u}{v} = c > \neginfty$ and the claim holds for every $c'<c$.
    Assume towards contradiction that $\fincolorof{x}{yv} > c$, by applying \autoref{clm:fin-clr-stp} repeatedly, we can decrease the color of $\fincolorof{x}{yv}$ step by step until reaching the color of $\fincolorof{u}{v}$ plus one, i.e., finding $z'\in\Sigma^*$ such that $\fincolorof{x}{yvz'} = c+1$ and $yvz'$ is $x$-invariant.
    
    From \autoref{clm:stbl-rlvnt-prd-ext} exists $z''\in\Sigma^*$ such that $\fincolorof{x}{yvz'z''} = c+1$ and $yvz'z''$ is $x$-reliable. Let $z = z'z''$, then $\fincolorof{x}{yvz} = c+1$ and because $yvz$ is $x$-invariant, we get that either $x(yvz)^\omega\in L$ iff $c+1$ is even or exists $i > 0$ such that $\fincolorof{x}{(yvz)^i} < c+1$. As $yvz$ is $x$-reliable it follows that for all $i > 0$ it holds that $c+1 = \fincolorof{x}{yvz} = \fincolorof{x}{(yvz)^i}$ thus it must hold that $x(yvz)^\omega\in L$ iff $c+1$ is even. Because $yvz$ is  $x$-invariant it holds that $x\sim_L xyvz$ thus $xy\sim_L xyvzy$, and from $xy\sim_L u$ it follows that $u\sim_L uvzy$ implying $vzy$ is $u$-invariant. From \autoref{clm:fin-clr-mono-ninc} it follows that $c = \fincolorof{u}{v} \geq \fincolorof{u}{vzy}$. 
    Thus, either $\fincolorof{u}{vzy}=c$ or $\fincolorof{u}{vzy}<c$.
    
    In the first case, $\fincolorof{u}{vzy} = c$. As $vzy$ is $u$-invariant we get that either $u(vzy)^\omega\in L$ iff $c$ is even or exists $i > 0$ such that $\fincolorof{u}{(vzy)^i} < c$. As $x(yvz)^\omega = xy(vzy)^\omega$ and $xy\sim_L u$ it follows that $x(yvz)^\omega\in L$ iff $u(vzy)^\omega\in L$. From $x(yvz)^\omega\in L$ iff $c+1$ is even it follows that $u(vzy)^\omega\in L$ iff $c+1$, contradicting $u(vzy)^\omega\in L$ iff $c$ is even. Thus must exist $i > 0$ such that $\fincolorof{u}{(vzy)^i} < c$. Considering now the period $vzy$ rather than $v$, using the induction hypothesis on $\fincolorof{u}{(vzy)^i} < c$ we get that $\fincolorof{u}{(vzy)^i} \geq \fincolorof{x}{y(vzy)^i}$. By restating $y(vzy)^i$ as $(yvz)^iy$, we get from \autoref{clm:fin-clr-mono-ninc} that $\fincolorof{x}{(yvz)^iy} \geq \fincolorof{x}{(yvz)^{i+1}}$. As $yvz$ is $x$-reliable it holds that $\fincolorof{x}{(yvz)^{i+1}} = \fincolorof{x}{yvz}$ and we get that $c > \fincolorof{x}{yvz} = c+1$, in contradiction. 
    
    In the second case it already holds that $\fincolorof{u}{vzy} < c$ thus from \autoref{clm:fin-clr-mono-ninc} it follows that $\fincolorof{u}{(vzy)^i} \leq \fincolorof{u}{vzy} < c$ which leads to the same contradiction. 
\end{inparaitem}
\end{enumerate}
\end{proof}

\begin{remark}\label{clm:use-of-less-is-more-in-step-by-step}
    We can now show that \autoref{clm:fin-clr-stp} also holds for $\fincolorof{u}{v} = 1$ given $\mincoloru{u} = 0$. As the color of $v$ is non-negative, it is relevant and thus there exists some $z\in\Sigma^*$ for which $vz$ is $u$-invariant. Because the minimal color of ${u}$ is $0$ there exists some $x\in\Sigma^*$ for which $\fincolorof{u}{x} = 0$. By applying \autoref{clm:fin-clr-diff-eqv} we get $0 \geq \fincolorof{u}{vzx}$ and as it is also $u$-invariant its color must be $0$ as needed.
\end{remark}

\begin{restatable}[Colors and Reliable Chains]{proposition}{clmcolorsandchains}\label{clm:colors-and-chains}
Let  $\mincoloru{u}=0$ (resp. $\mincoloru{u}=1$). Let $v$ be $u$-relevant.
Then 
$\fincolorof{u}{v}\geq c$ iff there exists a reliable chain 
$v_1 \prec v_2 \prec \ldots \prec v_k$  where $k=c+1$ (resp. $c$) such that $v\prec v_1$.
\end{restatable}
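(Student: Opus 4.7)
The plan is to prove both directions separately, handling the two cases $\mincoloru{u}=0$ and $\mincoloru{u}=1$ in parallel since they differ only in the chain length convention ($k=c+1$ versus $k=c$) and the terminal color of the construction.

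For the backward direction, given a reliable chain $v_1 \prec v_2 \prec \ldots \prec v_k$ with $v \prec v_1$, I would observe that each consecutive pair consists of $u$-reliable periods with opposite acceptance of $u v_i^\omega$, so \autoref{clm:mon-decr} forces $\fincolorof{u}{v_i} > \fincolorof{u}{v_{i+1}}$. Since these colors are integers bounded below by $\mincoloru{u}$, strict decrease over $k-1$ steps yields $\fincolorof{u}{v_1} \geq \mincoloru{u} + (k-1)$, which equals $c$ in either case. Finally, \autoref{clm:fin-clr-mono-ninc} applied to the extension $v \prec v_1$ gives $\fincolorof{u}{v} \geq \fincolorof{u}{v_1} \geq c$.

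For the forward direction, let $c_0 = \fincolorof{u}{v} \geq c$. I plan to iteratively construct a reliable chain whose elements have colors $c_0, c_0 - 1, \ldots, \mincoloru{u}$ in order, and then truncate it. First, I would use \autoref{clm:stbl-rlvnt-prd-ext} to extend $v$ to $v_1 = vz_1$ that is $u$-reliable with color $c_0$; to guarantee $v \prec v_1$ strictly, if $z_1 = \epsilon$ (which requires $v$ to already be $u$-reliable), I would replace $v_1$ by $vv$, which remains $u$-reliable of color $c_0$, since invariance transfers via the right congruence and stability follows from $(vv)^i = v^{2i}$. Iteratively, given a $u$-reliable $v_i$ of color $c_i > \mincoloru{u}$, I would apply the strengthened version of \autoref{clm:fin-clr-stp} (valid for $c_i > \mincoloru{u}$ by its footnote) to obtain $y$ such that $v_i y$ is $u$-invariant of color $c_i - 1$, and then use \autoref{clm:stbl-rlvnt-prd-ext} to produce a $u$-reliable $v_{i+1} \succ v_i$ of color $c_i - 1$; strictness $v_i \prec v_{i+1}$ is automatic since $y \neq \epsilon$ (otherwise the color would not change). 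Terminating when the color reaches $\mincoloru{u}$ yields a chain of length $c_0 + 1$ (when $\mincoloru{u}=0$) or $c_0$ (when $\mincoloru{u}=1$). By \autoref{clm:stbl-rlvnt-prd-acc}, consecutive integer colors induce alternating acceptance, so this is a reliable chain. Since its length is at least $k$, truncating to the first $k$ elements preserves both the reliability of the chain and the condition $v \prec v_1$.

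The main obstacle is the bookkeeping at the lower boundary of the construction: one cannot use \autoref{clm:fin-clr-stp} as stated (which only guarantees a step-down for $c>1$) but must invoke the stronger form in its footnote in order to step all the way down to $\mincoloru{u}$. The only other delicate point is securing strict prefix relationships throughout; this is handled by the $v \mapsto vv$ doubling trick at the initial step, while every subsequent step is automatically strict because the color decreases.
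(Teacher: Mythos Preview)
Your proposal is correct and follows essentially the same approach as the paper: the backward direction via repeated use of \autoref{clm:mon-decr} plus \autoref{clm:fin-clr-mono-ninc}, and the forward direction via alternating applications of \autoref{clm:fin-clr-stp} and \autoref{clm:stbl-rlvnt-prd-ext}, then truncating. You are in fact more careful than the paper on two points the paper glosses over: you explicitly invoke the strengthened step-by-step lemma (the footnote version, justified by \autoref{clm:use-of-less-is-more-in-step-by-step}) to descend all the way to $\mincoloru{u}$, and you handle the strictness of $v\prec v_1$ with the doubling trick.
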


\begin{proof}
    Let $\fincolorof{u}{v}\geq c$. As $v$ is $u$-relevant $c > \neginfty$ and by repetitive applications of \autoref{clm:fin-clr-stp}
    and \autoref{clm:stbl-rlvnt-prd-ext} we get that there exists a reliable chain $v_1 \prec v_2 \prec \ldots \prec v_k$ where $\fincolorof{u}{v_1}=c$ and $\fincolorof{u}{v_{i+1}}= \fincolorof{u}{v_{i}}-1$. Thus if $\mincoloru{u}=0$ then the chain is of length $c+1$ and if it is $1$ then the length is $c$.  

    For the other direction, assume there exists a reliable chain of
$v_1 \prec v_2 \prec \ldots \prec v_k$ of
length $k$. It follows from repetitive applications of \autoref{clm:mon-decr} that 
the $\fincolorof{u}{v_1} \geq \fincolorof{u}{v_k}+k-1$.
Thus, if $\mincoloru{u}=0$ then $\fincolorof{u}{v_k}\geq 0$ and $\fincolorof{u}{v}\geq \fincolorof{u}{v_1} \geq c$.
Similarly for $\mincoloru{u}=1$.
\end{proof}

\clmpersimplrel*
\begin{proof}
    As there are only minor differences, we prove both claims simultaneously.
    In both cases $(u,v)$ is normalized thus $v$ is $u$-invariant and relevant.
    We show $v$ is stable by induction on $c = \finColorof{u}{v}$. As $v$ is relevant it holds that $\finColorof{u}{v^i} \geq 0$ for all $i \geq 1$. In the base case $c = 0$ and so $v$ must be stable. 
    
    For the induction step, assume towards contradiction that $v$ is not stable, thus from \autoref{clm:stbl-prd-evnt} exists $s \geq 1$ such that $v^s$ is stable and $\finColorof{u}{v^s} = c' < c = \finColorof{u}{v}$. 
    To later use \autoref{clm:pers-pump}, which is needed for the duo-normalized FDFA, we further increase the number of repetitions as follows,
    let $k$ be the number from \autoref{clm:pers-pump} and $l = 1 + ki$ for some $i$ such that $l \geq s$. In the normalized FDFA case we leave $l$ to be $s$.
    Then $v^l$ is also stable and of the same color (i.e. $c'$).
    From \autoref{clm:fin-clr-stp} exists $z_1 \in \Sigma^*$ such that $\finColorof{u}{vz_1} = c'+1$ and $vz_1$ is $u$-invariant. From \autoref{clm:stbl-rlvnt-prd-ext} exists $z_2 \in \Sigma^*$ such that $\finColorof{u}{vz_1z_2} = c'+1$ and $vz_1z_2$ is reliable. From \autoref{clm:duo-exists} exists $j \geq 1$ such that $(vz_1z_2)^j$ is $\prog{u}{}$-duo-normalized. Let $z$ be such that $vz = (vz_1z_2)^j$. As $vz_1z_2$ is reliable, it holds that $\finColorof{u}{vz} = c'+1$. and as it is $u$-invariant it follows that $(u,vz)$ is $\aut{Q}$-normalized, meaning $(u,vz)$ is duo-normalized in $\aut{F}$. 
    If $(u,v)$ is persistent then from \autoref{clm:pers-pump} and $(u,vz)$ being duo-normalized it follows that $(u,v^lz)$ is also duo-normalized in $\aut{F}$. 
    Even if $(u,v)$ is only duo-normalized, $(u,v^lz)$ reaches the same state as $(u,vz)$. This state is either accepting or rejecting and thus both words are either in the language or not (if $(u,v)$ is persistent then both are duo-normalized, otherwise both are normalized which is enough in the normalized FDFA case).
    Note that by induction hypothesis, both words are reliable, thus the parity of their colors must agree. This implies $c'' = \finColorof{u}{v^lz} \leq c' - 1 < c$. We can now repeat this process and extend $vz$ to some word $vzz'$ of color $c'' + 1$. To avoid contradiction, the color of $v^lzz'$ must decrease again. As the number of colors is finite, after a finite number of steps we ought to reach a contradiction.
\end{proof}

\clmrelichnimplpreschn*
\begin{proof}
    The proof is by induction on $d$.
    For $d=1$, by \autoref{clm:duo-exists} there exists an $i$ such that $y_1=(v_1)^i$ is persistent and it follows that $u(v_1)^\omega\in \sema{\aut{F}}$ iff $u(y_1)^\omega\in \sema{\aut{F}}$.

    For the induction step, let ${v_1 \prec v_2 \prec \ldots \prec v_d}$ be a reliable chain wrt $u$.
    Again, by \autoref{clm:duo-exists} there exists an $i$ such that $y_1=(v_1)^i$ is persistent. As $v_1$ is reliable it holds that $\finColorof{u}{v_1} = \finColorof{u}{y_1}$ and from \autoref{clm:colors-and-chains} there exists a reliable chain $y_1 \prec y_2 \prec y_3 \prec \ldots \prec y_d$. Both reliable chains start at the same color, thus ${u}(y_i)^\omega\in \sema{\aut{F}}$ iff ${u}(v_i)^\omega\in\sema{\aut{F}}$.
    
    By the induction hypothesis there exists a persistent chain $z_2 \prec z_3 \ldots \prec z_{d}$ such that
    $y_1 \prec y_2 \prec z_2$ and 
    ${u}(z_i)^\omega\in \sema{\aut{F}}$ iff ${u}(y_i)^\omega\in\sema{\aut{F}}$ for every $2\leq i \leq d$.
    By \autoref{clm:duo-exists} there exists a $j$ such that ${y_1}^j$ is persistent. Let $k$ be the number from \autoref{clm:pers-pump} and $l = j(1 + k)$.
    Let $z'_i$ be the word obtained from $z_i$ by replacing its $y_1$ prefix by $(y_1)^l$.
    Then by the lemma $z'_i$ are also persistent and reach the same state as $z_i$, thus ${u}(z'_i)^\omega\in \sema{\aut{F}}$ iff ${u}(z_i)^\omega\in\sema{\aut{F}}$.
    Let $z'_1=(y_1)^l$.
    Then $z'_1 \prec z'_2 \prec z'_3 \prec \ldots \prec z'_d$ is a persistent chain and 
    ${u}(z'_i)^\omega\in \sema{\aut{F}}$ iff ${u}(v_i)^\omega\in\sema{\aut{F}}$.
\end{proof}

\thmwagneronnormalizedfdfas*
\begin{proof}
Let $\aut{F}$ be an FDFA using normalized acceptance.
Following \autoref{clm:pers-impl-rel} and \autoref{clm:reli-chn-impl-pres-chn} the Wagner measure can be derived from the length of a maximal duo-normalized chain.
A non-deterministic Turing machine can guess the existence of such a chain and verify it as follows.
By~\cite[Lemma 21]{BL23}, in a projective FDFA if $x$ is duo-normalized wrt $\sim$ and $\approx_u$ then for every $y \approx_u x$ we have that $y$ is also duo-normalized.
Recall that any FDFA using normalization can be transformed with a quadratic blowup into a \emph{projective FDFA} by multiplying the progress DFAs by the leading automaton.
The machine guesses a duo-chain in a projective version of the FDFA (that it constructs on the fly).

It starts by guessing a word letter by letter until reaching some state $q$ of the leading automaton.  
It then guesses a number $k$ corresponding to the length of the chain in $\aut{P}_q$ (which is bounded by the number of states), and the parity $b$ of the first element in the chain.
Let $p_0$ be the initial state of $\aut{P}_q$ and set $p=p_0$.
For $i=1,2,\ldots,k$ it guesses a state $p'$ that is accepting iff $i\% 2=b$. 
It then guesses a word letter by letter and traces its run simultaneously from $q$ and $p$ until reaching $q$ and $p'$, resp.  
This proves that $(q,p')$ is reachable from $(q,p)$ in the projective FDFA version.
Then, to verify that there is a duo-normalized word reaching $(q,p')$ it guesses a word, letter by letter, from $(q,p_0,p')$ until reaching $(q,p',p')$ which
indicates that the word closes a loop in the leading automaton, and in the progress DFA of the projective FDFA it reaches $(q,p')$ from $(q,p_0)$ and closes a loop on it, which provides the desired duo-normalized witness.  
When this holds it updates $p$ to $p'$.

The memory it requires is for $k$, $b$, $q$, $p$, $p'$ and the current (pair or) tuple $(x_q,x_p,x_{p'})$, which are all of size logarithmic in the number of states of the FDFA.
Note that it suffices to bound the length of a word reaching $q$ by $n$, the number of states in the leading automaton, and the duo-normalized by $nm$ where $m$ is the size of leading automaton. 
\end{proof}

\thmcolorfulexpsmallersyntactic*
\begin{proof}
    Consider the family of languages $\{L_n\}_{n\in\mathbb{N}}$ over $\Sigma=\{a, b, \langle, \rangle\}$ where $L_n$ accepts all words with infinitely many occurrences of $\langle a^k b^m \rangle$ for some $1 \leq k \leq n$ and $m$ that is divisible by the $k$-th prime. We will show that the size of the colorful FDFA for $L_n$ is polynomial in $n$, and that the size of the syntactic FDFA is exponential.

    The language $L_n$ is prefix-independent, so the leading automation in both the colorful and syntactic FDFA consists of one state, and there is only one progress DFA.

    If $v \in \Sigma^*$ contains such $\langle a^k b^m \rangle$, then $\forall z \in \Sigma^* : (vz)^\omega \in L_n$, implying $\fincolorof{\epsilon}{v} = 0$. If on the other hand, $v$ contains no such $\langle a^k b^m \rangle$, then it is easy to see that $(v \langle \rangle)^\omega \notin L_n$, and hence $\fincolorof{\epsilon}{v} = 1$. Therefore, the progress DFA of the colorful FDFA accepts the language $\Sigma^* \langle a^k b^m \rangle \Sigma^*$ for some $1 \leq k \leq n$ and $m$ that is divisible by the $k$-th prime, which may be recognized by a DFA of size asymptotic to the sum of the first $n$ prime numbers, known to be polynomial in $n$.

    Contrasting this, the progress DFA of the syntactic FDFA must accept a word $ b^m \rangle\langle a^k$ iff $1 \leq k \leq n$ and $m$ is divisible by the $k$-th prime number. Hence, while reading {$b^m$} it must remember whether or not it is divisible by each of the first $n$ prime numbers, which requires at least as many states as the multiplication of the first $n$ prime numbers, known to be exponential in $n$.
\end{proof}

\thmminduoexpsmallercolorful*
\begin{proof}
    Consider the family of languages $\{L_n\}_{n \in \mathbb{N}}$ over $\Sigma = {a_1,a_2,...,a_n}$ where $L_n$ accepts all words with infinitely many occurrences of all letters $a_1,a_2,...,a_n$. We show there exists a duo-normalized FDFA for $L_n$ is at most linear in $n$, and that the size of the colorful FDFA is exponential.

    The language $L_n$ is prefix-independent, so the leading automaton in the colorful FDFA consists of a single state, and there is only one progress DFA.
    If $v \in \Sigma^*$ contains all letters $a_1,a_2,...,a_n$, then $\forall z \in \Sigma^* : (vz)^\omega \in L_n$, implying $\fincolorof{\epsilon}{v} = 0$. If on the other hand, $v$ does not contain all the desired letters, then $v^\omega \notin L_n$, and hence $\fincolorof{\epsilon}{v} = 1$. Therefore, the progress DFA of the colorful FDFA accepts the language $L = \{v \in \Sigma^* : v$ contains all letters $a_1,a_2,...,a_n\}$, whose minimal DFA must remember whether or not each letter was encountered, and therefore requires $2^n$ states.

    Contrasting this, we can build a duo-normalized FDFA $\aut{F}$ for $L_n$ as follows. The leading automaton of $\aut{F}$ has one state. The progress DFA $\prog{\epsilon}{}$ starts in a rejecting state searching for $a_1$, then if it is found transition to a rejecting state searching for $a_2$, and so on until once $a_n$ is found it'll transition to an accepting sink state. Note this DFA only uses $n$ states, and accepts the language $L = \Sigma^* a_1 \Sigma^* a_2 \cdots a_n \Sigma^*$.

    Let $v \in \Sigma^*$ be a duo-normalized word wrt. $\prog{\epsilon}{}$. If reading $v$ reaches the accepting sink, then by definition of $\prog{\epsilon}{}$ it must contain all letters $a_1,a_2,...,a_n$, and therefore $v^\omega \in L$. If on the other hand, reading $v$ reaches the rejecting state searching for $a_i$, then since reading $v$ loops on said state, we deduce $v$ does not contain $a_i$, and therefore that $v^\omega \notin L$.
    This proves the claim for the colorful FDFA.
\end{proof}

\thmduosucc*
\begin{proof}[Proof sketch]
    Let $\Sigma_a=\{a_1,\ldots,a_n\}$ and $\Sigma_s=\{s_1,\ldots,s_n\}$ and $\Sigma = \Sigma_a \cup \Sigma_s $.
    Consider the family of languages
    $$L_n = \left\{ w\in(\Sigma^*\Sigma_a)^\omega \left|
    \begin{array}{cc}
        \text{Let }
        m = \max \{j ~|~  a_j \in \Sigma_a \cap \inf(w) \}.  \\
        \text{Then }  s_m\in\Sigma_s \text{ appears inf. often in } w.
    \end{array}\right.\right\}.$$

    Let $\aut{F}$ be an FDFA for $L_n$.
    Consider $u\in\Sigma^*$ and let $\aut{Q}(u)=q_u$ where
    $\aut{Q}$ is the leading automaton. 
    Consider the progress DFA $\aut{P}_{q_u}$ and a word $v$ such that $\aut{Q}(u)=\aut{Q}(uv)$.
    When $\aut{P}_{q_u}$ reads $v$ it needs to accept iff $s_m$ occurs where $m = \max \{j ~|~  a_j \text{ occurs in } v\}$.
    Note that it could see $s_m$ before seeing $a_m$ where $m$ is the maximal index of a letter $a_j$ in $v$.
    Thus if $q_u$ does not convey any information about the maximal letter seen when looping on $q_u$, then $\aut{P}_{q_u}$ needs to track itself the maximal index $m$ of the letters $a_i$'s read and the set of all $s_j$ seen for $j\geq m$. Thus it needs $2^{n-m}$ states after seeing $a_m$, and before seeing any $a_i$ letters it needs $2^n$.
    So $\aut{Q}$ must convey some information for the FDFA to be of polynomial size.
    The most useful information would be the maximal index of a letter $a_i$ traversed on any loop on $q_u$. While $\aut{Q}$ does not know when it entered a period, it can keep track of this while reading increasing prefixes of $uv^\omega$ using the \textsc{lar} (latest appearance record) construction, but this requires $\Theta(n!)$ states~\cite{GurevichH82,DziembowskiJW97}.
    Suppose $q_u$ does convey some useful information (but not exactly $m$). Such useful information must be some restriction on the possible values of $m$. If such information is available then $\aut{P}_{q_u}$ needs to track only the relevant $s_j$'s.
    If the amount of possible such $m$'s is a function of $n$ then $\aut{P}_{q_u}$  still needs to track a subset of some function of $n$ which is exponential in $n$.
    Thus the leading automaton can only convey information  a fixed number of such $m$'s per state.
    If this was possible using less than exponentially many states, then so would detecting the true maximal among the set.

    Next we present a proper duo-normalized FDFA recognizing $L_n$ with $O(n)$ states.
    As $|\sim_{L_n}| = 1$ the leading automaton has a single state. The progress automaton has an initial state and $2n$ states $q_1,q'_1,q_2,q'_2,\ldots,q_n,q'_n$. The accepting states are all the $q'_j$'s. The initial state has a self loop with $\Sigma_s$. Once it reads an $a_j$ it moves to the respective $q_j$. From every $q_i$ or $q'_i$ when reading $a_k$ for $k > i$, we move to $q_k$, thus `remembering' the maximal $a_j$ seen. We advance from $q_j$ to $q'_j$ only when reading $s_j$, thus reaching an accepting state only if $s_j$ appears for the current maximal $a_j$ witnessed. As this is a duo-normalized FDFA, even if the required $s_j$ appeared before the maximal $a_j$, the progress automaton will reach $q'_j$ when reading a duo-normalization of the period. All other transitions are self loops (that is, reading $a_i$ for $i \leq k$ from $q_k$ and $q'_k$, or $\Sigma_s \setminus \{s_j\}$ from $q_j$, or any $\Sigma_s$ from $q'_j$).

\end{proof}

\end{document}